\documentclass[journal]{IEEEtran}

\ifCLASSINFOpdf
   \usepackage[pdftex]{graphicx}
 \else
\fi

%
\usepackage{amsthm}

%

\hyphenation{op-tical net-works semi-conduc-tor}

\usepackage{amssymb}
\usepackage{amsmath}
\usepackage{bm}
\usepackage{cite}
\usepackage{amsmath,amssymb,amsfonts}
\usepackage{algorithmic}
\usepackage{graphicx}
\usepackage{textcomp}
\usepackage{multirow}
\usepackage{caption}
\usepackage{subcaption}
\usepackage{gensymb}
\usepackage{enumitem}
\usepackage[linesnumbered,ruled,vlined]{algorithm2e}
\usepackage{xcolor}
\usepackage[nolist,printonlyused]{acronym}
\usepackage{comment}

\newtheorem{theorem}{Theorem}
\newtheorem{lemma}{Lemma}
\newtheorem{corollary}{Corollary}[lemma]


\begin{acronym}
  \acro{2G}{Second Generation}
  \acro{3G}{3$^\text{rd}$~Generation}
  \acro{3GPP}{3$^\text{rd}$~Generation Partnership Project}
  \acro{4G}{4$^\text{th}$~Generation}
  \acro{5G}{5$^\text{th}$~Generation}
  \acro{6G}{6$^\text{th}$~Generation}
  \acro{AA}{Antenna Array}
  \acro{ABF}{analog beamforming}
  \acro{AC}{Admission Control}
  \acro{ACF}{autocorrelation function}
  \acro{AD}{Attack-Decay}
  \acro{ADC}{analog-to-digital converter}
  \acro{ADSL}{Asymmetric Digital Subscriber Line}
	\acro{AHW}{Alternate Hop-and-Wait}
  \acro{AMC}{Adaptive Modulation and Coding}
	\acro{AP}{Access Point}
  \acro{APA}{Adaptive Power Allocation}
  \acro{AoA}{angle of arrival}
  \acro{AoD}{angle of departure}
  \acro{ARMA}{Autoregressive Moving Average}
  \acro{ATES}{Adaptive Throughput-based Efficiency-Satisfaction Trade-Off}
  \acro{AWGN}{additive white Gaussian noise}
  \acro{BB}{Branch and Bound}
  \acro{BD}{Block Diagonalization}
  \acro{BER}{bit error rate}
  \acro{BF}{Best Fit}
  \acro{BLER}{BLock Error Rate}
  \acro{BPC}{Binary power control}
  \acro{BPSK}{Binary Phase-Shift Keying}
  \acro{BPA}{Best \ac{PDPR} Algorithm}
  \acro{BRA}{Balanced Random Allocation}
  \acro{BS}{base station}
  \acro{CAP}{Combinatorial Allocation Problem}
  \acro{CAPEX}{Capital Expenditure}
  \acro{CBF}{Coordinated Beamforming}
  \acro{CBR}{Constant Bit Rate}
  \acro{CBS}{Class Based Scheduling}
  \acro{CC}{Congestion Control}
  \acro{CDF}{cumulative distribution function}
  \acro{CDMA}{Code-Division Multiple Access}
  \acro{CL}{Closed Loop}
  \acro{CLPC}{Closed Loop Power Control}
  \acro{CNR}{Channel-to-Noise Ratio}
  \acro{CPA}{Cellular Protection Algorithm}
  \acro{CPICH}{Common Pilot Channel}
  \acro{CoMP}{Coordinated Multi-Point}
  \acro{CQI}{Channel Quality Indicator}
  \acro{CRM}{Constrained Rate Maximization}
	\acro{CRN}{Cognitive Radio Network}
  \acro{CS}{compressed sensing}
  \acro{CSI}{channel state information}
  \acro{CSIR}{channel state information at the receiver}
  \acro{CSIT}{channel state information at the transmitter}
  \acro{CRLB}{Cramer-Rao lower bound}
  \acro{CUE}{cellular user equipment}
  \acro{D2D}{device-to-device}
  \acro{DCA}{Dynamic Channel Allocation}
  \acro{DE}{Differential Evolution}
  \acro{DFT}{discrete Fourier transform}
  \acro{DIST}{Distance}
  \acro{DL}{downlink}
  \acro{DMA}{Double Moving Average}
	\acro{DMRS}{Demodulation Reference Signal}
  \acro{D2DM}{D2D Mode}
  \acro{DMS}{D2D Mode Selection}
  \acro{DoA}{direction of arrival}
  \acro{DPC}{Dirty Paper Coding}
  \acro{DRA}{Dynamic Resource Assignment}
  \acro{DSA}{Dynamic Spectrum Access}
  \acro{DSM}{Delay-based Satisfaction Maximization}
  \acro{ECC}{Electronic Communications Committee}
  \acro{EFLC}{Error Feedback Based Load Control}
  \acro{EI}{Efficiency Indicator}
  \acro{eNB}{Evolved Node B}
  \acro{EPA}{Equal Power Allocation}
  \acro{EPC}{Evolved Packet Core}
  \acro{EPS}{Evolved Packet System}
  \acro{E-UTRAN}{Evolved Universal Terrestrial Radio Access Network}
  \acro{ES}{Exhaustive Search}
  \acro{FDD}{frequency division duplexing}
  \acro{FDM}{Frequency Division Multiplexing}
  \acro{FER}{Frame Erasure Rate}
  \acro{FF}{Fast Fading}
  \acro{FSB}{Fixed Switched Beamforming}
  \acro{FST}{Fixed SNR Target}
  \acro{FTP}{File Transfer Protocol}
  \acro{GA}{Genetic Algorithm}
  \acro{GBR}{Guaranteed Bit Rate}
  \acro{GLR}{Gain to Leakage Ratio}
  \acro{GOS}{Generated Orthogonal Sequence}
  \acro{GPL}{GNU General Public License}
  \acro{GRP}{Grouping}
  \acro{HARQ}{Hybrid Automatic Repeat Request}
  \acro{HBF}{hybrid beamforming}
  \acro{HMS}{Harmonic Mode Selection}
  \acro{HOL}{Head Of Line}
  \acro{HSDPA}{High-Speed Downlink Packet Access}
  \acro{HSPA}{High Speed Packet Access}
  \acro{HTTP}{HyperText Transfer Protocol}
  \acro{ICMP}{Internet Control Message Protocol}
  \acro{ICI}{Intercell Interference}
  \acro{ID}{Identification}
  \acro{IDFT}{inverse discrete Fourier transform}
  \acro{IETF}{Internet Engineering Task Force}
  \acro{ILP}{Integer Linear Program}
  \acro{JRAPAP}{Joint RB Assignment and Power Allocation Problem}
  \acro{UID}{Unique Identification}
  \acro{IID}{Independent and Identically Distributed}
  \acro{IIR}{Infinite Impulse Response}
  \acro{ILP}{Integer Linear Problem}
  \acro{IMT}{International Mobile Telecommunications}
  \acro{INV}{Inverted Norm-based Grouping}
	\acro{IoT}{Internet of Things}
  \acro{IP}{Internet Protocol}
  \acro{IPv6}{Internet Protocol Version 6}
  \acro{ISD}{Inter-Site Distance}
  \acro{ISI}{Inter Symbol Interference}
  \acro{ITU}{International Telecommunication Union}
  \acro{JOAS}{Joint Opportunistic Assignment and Scheduling}
  \acro{JOS}{Joint Opportunistic Scheduling}
  \acro{JP}{Joint Processing}
	\acro{JS}{Jump-Stay}
  \acro{KKT}{Karush-Kuhn-Tucker}
  \acro{L3}{Layer-3}
  \acro{LAC}{Link Admission Control}
  \acro{LA}{Link Adaptation}
  \acro{LC}{Load Control}
  \acro{LOS}{Line of Sight}
  \acro{LP}{Linear Programming}
  \acro{LS}{least squares}
  \acro{LTE}{Long Term Evolution}
  \acro{LTE-A}{LTE-Advanced}
  \acro{LTE-Advanced}{Long Term Evolution Advanced}
  \acro{M2M}{Machine-to-Machine}
  \acro{MAC}{Medium Access Control}
  \acro{MANET}{Mobile Ad hoc Network}
  \acro{MC}{Modular Clock}
  \acro{MCS}{Modulation and Coding Scheme}
  \acro{MDB}{Measured Delay Based}
  \acro{MDI}{Minimum D2D Interference}
  \acro{MF}{Matched Filter}
  \acro{MG}{Maximum Gain}
  \acro{MH}{Multi-Hop}
  \acro{MIMO}{multiple input multiple output}
  \acro{MINLP}{Mixed Integer Nonlinear Programming}
  \acro{MIP}{Mixed Integer Programming}
  \acro{MISO}{Multiple Input Single Output}
  \acro{ML}{maximum likelihood}
  \acro{MLWDF}{Modified Largest Weighted Delay First}
  \acro{MME}{Mobility Management Entity}
  \acro{MMSE}{minimum mean square error}
  \acro{MOS}{Mean Opinion Score}
  \acro{MPF}{Multicarrier Proportional Fair}
  \acro{MRA}{Maximum Rate Allocation}
  \acro{MR}{Maximum Rate}
  \acro{MRC}{Maximum Ratio Combining}
  \acro{MRT}{Maximum Ratio Transmission}
  \acro{MRUS}{Maximum Rate with User Satisfaction}
  \acro{MS}{mobile station}
  \acro{MSE}{mean squared error}
  \acro{MSI}{Multi-Stream Interference}
  \acro{MSR}{main-to-secondary-paths power ratio}
  \acro{MTC}{Machine-Type Communication}
  \acro{MTSI}{Multimedia Telephony Services over IMS}
  \acro{MTSM}{Modified Throughput-based Satisfaction Maximization}
  \acro{MU-MIMO}{multiuser multiple input multiple output}
  \acro{MU}{multi-user}
  \acro{NAS}{Non-Access Stratum}
  \acro{NB}{Node B}
  \acro{NE}{Nash equilibrium}
	\acro{NCL}{Neighbor Cell List}
  \acro{NLP}{Nonlinear Programming}
  \acro{NLOS}{Non-Line of Sight}
  \acro{NMSE}{normalized mean square error}
  \acro{NORM}{Normalized Projection-based Grouping}
  \acro{NN}{neural network}
  \acro{NP}{Non-Polynomial Time}
  \acro{NRT}{Non-Real Time}
  \acro{NSPS}{National Security and Public Safety Services}
  \acro{O2I}{Outdoor to Indoor}
  \acro{OFDMA}{orthogonal frequency division multiple access}
  \acro{OFDM}{orthogonal frequency division multiplexing}
  \acro{OFPC}{Open Loop with Fractional Path Loss Compensation}
	\acro{O2I}{Outdoor-to-Indoor}
  \acro{OL}{Open Loop}
  \acro{OLPC}{Open-Loop Power Control}
  \acro{OL-PC}{Open-Loop Power Control}
  \acro{OMP}{orthogonal matching pursuit}
  \acro{OPEX}{Operational Expenditure}
  \acro{ORB}{Orthogonal Random Beamforming}
  \acro{JO-PF}{Joint Opportunistic Proportional Fair}
  \acro{OSI}{Open Systems Interconnection}
  \acro{PAIR}{D2D Pair Gain-based Grouping}
  \acro{PAPR}{Peak-to-Average Power Ratio}
  \acro{P2P}{Peer-to-Peer}
  \acro{PC}{Power Control}
  \acro{PCI}{Physical Cell ID}
  \acro{PDF}{probability density function}
  \acro{PDPR}{pilot-to-data power ratio}
  \acro{PER}{Packet Error Rate}
  \acro{PF}{Proportional Fair}
  \acro{P-GW}{Packet Data Network Gateway}
  \acro{PL}{Pathloss}
  \acro{PPR}{pilot power ratio}
  \acro{PRB}{physical resource block}
  \acro{PROJ}{Projection-based Grouping}
  \acro{ProSe}{Proximity Services}
  \acro{PS}{Packet Scheduling}
  \acro{PSAM}{pilot symbol assisted modulation}
  \acro{PSO}{Particle Swarm Optimization}
  \acro{PZF}{Projected Zero-Forcing}
  \acro{QAM}{Quadrature Amplitude Modulation}
  \acro{QoS}{Quality of Service}
  \acro{QPSK}{Quadri-Phase Shift Keying}
  \acro{RAISES}{Reallocation-based Assignment for Improved Spectral Efficiency and Satisfaction}
  \acro{RAN}{Radio Access Network}
  \acro{RA}{Resource Allocation}
  \acro{RAT}{Radio Access Technology}
  \acro{RATE}{Rate-based}
  \acro{RB}{resource block}
  \acro{RBG}{Resource Block Group}
  \acro{REF}{Reference Grouping}
  \acro{RF}{radio-frequency}
  \acro{RLC}{Radio Link Control}
  \acro{RM}{Rate Maximization}
  \acro{RMSE}{root mean squared error}
  \acro{RNC}{Radio Network Controller}
  \acro{RND}{Random Grouping}
  \acro{RRA}{Radio Resource Allocation}
  \acro{RRM}{Radio Resource Management}
  \acro{RSCP}{Received Signal Code Power}
  \acro{RSRP}{Reference Signal Receive Power}
  \acro{RSRQ}{Reference Signal Receive Quality}
  \acro{RR}{Round Robin}
  \acro{RRC}{Radio Resource Control}
  \acro{RSSI}{Received Signal Strength Indicator}
  \acro{RT}{Real Time}
  \acro{RU}{Resource Unit}
  \acro{RUNE}{RUdimentary Network Emulator}
  \acro{RV}{Random Variable}
  \acro{SAC}{Session Admission Control}
  \acro{SCM}{Spatial Channel Model}
  \acro{SC-FDMA}{Single Carrier - Frequency Division Multiple Access}
  \acro{SD}{Soft Dropping}
  \acro{S-D}{Source-Destination}
  \acro{SDPC}{Soft Dropping Power Control}
  \acro{SDMA}{Space-Division Multiple Access}
  \acro{SER}{Symbol Error Rate}
  \acro{SES}{Simple Exponential Smoothing}
  \acro{S-GW}{Serving Gateway}
  \acro{SINR}{signal-to-interference-plus-noise ratio}
  \acro{SI}{Satisfaction Indicator}
  \acro{SIC}{successive interference cancellation}
  \acro{SIP}{Session Initiation Protocol}
  \acro{SISO}{single input single output}
  \acro{SIMO}{Single Input Multiple Output}
  \acro{SIR}{signal-to-interference ratio}
  \acro{SLNR}{Signal-to-Leakage-plus-Noise Ratio}
  \acro{SMA}{Simple Moving Average}
  \acro{SNR}{signal-to-noise ratio}
  \acro{SORA}{Satisfaction Oriented Resource Allocation}
  \acro{SORA-NRT}{Satisfaction-Oriented Resource Allocation for Non-Real Time Services}
  \acro{SORA-RT}{Satisfaction-Oriented Resource Allocation for Real Time Services}
  \acro{SPF}{Single-Carrier Proportional Fair}
  \acro{SRA}{Sequential Removal Algorithm}
  \acro{SRS}{Sounding Reference Signal}
  \acro{SSE}{sum of squared errors}
  \acro{SU-MIMO}{Single-User Multiple Input Multiple Output}
  \acro{SU}{Single-User}
  \acro{SVD}{singular value decomposition}
  \acro{TCP}{Transmission Control Protocol}
  \acro{TDD}{time division duplexing}
  \acro{TDMA}{Time Division Multiple Access}
  \acro{TETRA}{Terrestrial Trunked Radio}
  \acro{TP}{Transmit Power}
  \acro{TPC}{Transmit Power Control}
  \acro{TTI}{Transmission Time Interval}
  \acro{TTR}{Time-To-Rendezvous}
  \acro{TSDCE}{transformed spatial domain channel estimation}
  \acro{TSM}{Throughput-based Satisfaction Maximization}
  \acro{TU}{Typical Urban}
  \acro{UE}{User Equipment}
  \acro{UEPS}{Urgency and Efficiency-based Packet Scheduling}
  \acro{UL}{uplink}
  \acro{UMTS}{Universal Mobile Telecommunications System}
  \acro{URI}{Uniform Resource Identifier}
  \acro{URM}{Unconstrained Rate Maximization}
  \acro{UT}{user terminal}
  \acro{VR}{Virtual Resource}
  \acro{VoIP}{Voice over IP}
  \acro{WAN}{Wireless Access Network}
  \acro{WCDMA}{Wideband Code Division Multiple Access}
  \acro{WF}{Water-filling}
  \acro{WiMAX}{Worldwide Interoperability for Microwave Access}
  \acro{WINNER}{Wireless World Initiative New Radio}
  \acro{WLAN}{Wireless Local Area Network}
  \acro{WLS}{weighted least squares}
  \acro{WMPF}{Weighted Multicarrier Proportional Fair}
  \acro{WPF}{Weighted Proportional Fair}
  \acro{WSN}{Wireless Sensor Network}
  \acro{WWW}{World Wide Web}
  \acro{XIXO}{(Single or Multiple) Input (Single or Multiple) Output}
  \acro{ZF}{zero-forcing}
  \acro{ZMCSCG}{Zero Mean Circularly Symmetric Complex Gaussian}
  \acro{FFT}{fast Fourier transform}
  \acro{KF}{Kalman filtering}
\end{acronym}

\begin{document}

\title{Fast Channel Estimation in the Transformed Spatial Domain for Analog Millimeter Wave Systems}

\author{Sandra~Roger,~\IEEEmembership{Senior Member,~IEEE,}
        Maximo~Cobos,~\IEEEmembership{Senior Member,~IEEE,}\\
        Carmen~Botella-Mascarell,~\IEEEmembership{Senior Member,~IEEE,}        
        Gábor~Fodor,~\IEEEmembership{Senior Member,~IEEE}
\thanks{S. Roger, M. Cobos and C. Botella-Mascarell are with the Computer Science Department, Universitat de Val\`encia, Av. de la Universitat s/n, 46100 Burjassot, Spain, e-mail: \{sandra.roger, maximo.cobos, carmen.botella\}@uv.es. G. Fodor is with Ericsson Research and KTH Royal Institute of Technology, Malvinasv 10 16400 Stockholm, Sweden, e-mail: gaborf@kth.se.}
\thanks{This work was partially supported by the Spanish Ministry of Science, Innovation and Universities through grant RYC-2017-22101 and project RTI2018-097045-B-C21 (supported also by ERDF), and by the Generalitat Valenciana through projects GV/2020/046 and AICO/2020/154.}
}

\markboth{IEEE Transactions on Wireless Communications}%
{Roger \MakeLowercase{\textit{et al.}}: }

\IEEEoverridecommandlockouts
\IEEEpubid{\begin{minipage}[t]{\textwidth}\ \\[10pt]
        \centering\footnotesize{\copyright 2021 IEEE.  Personal use of this material is permitted.  Permission from IEEE must be obtained for all other uses, in any current or future media, including reprinting/republishing this material for advertising or promotional purposes, creating new collective works, for resale or redistribution to servers or lists, or reuse of any copyrighted component of this work in other works.}
\end{minipage}}

\maketitle

\begin{abstract}
Fast channel estimation in millimeter-wave (mmWave) systems is a fundamental enabler of high-gain beamforming, which boosts coverage and capacity. The channel estimation stage typically involves an initial beam training process where a subset of the possible beam directions at the transmitter and receiver is scanned along a predefined codebook. Unfortunately, the high number of transmit and receive antennas deployed in mmWave systems increase the complexity of the beam selection and channel estimation tasks. In this work, we tackle the channel estimation problem in analog systems from a different perspective than used by previous works. In particular, we propose to move the channel estimation problem from the angular domain into the transformed spatial domain, in which estimating the angles of arrivals and departures corresponds to estimating the angular frequencies of paths constituting the mmWave channel. The proposed approach, referred to as transformed spatial domain channel estimation (TSDCE) algorithm, exhibits robustness to additive white Gaussian noise by combining low-rank approximations and sample autocorrelation functions for each path in the transformed spatial domain. Numerical results evaluate the mean square error of the channel estimation and the direction of arrival estimation capability. TSDCE significantly reduces the first, while exhibiting a remarkably low computational complexity compared with well-known benchmarking schemes. 
\end{abstract}

\begin{IEEEkeywords}
mmWave, channel estimation, analog beamforming, transformed spatial domain, 2D autocorrelation.
\end{IEEEkeywords}


\section{Introduction}
\IEEEPARstart{T}{he} unprecedented growth of data traffic driven by the increasing number of mobile broadband subscriptions and the increasing data volume per subscription is fueling the evolution of mobile systems~\cite{Ericsson2020}. To support user data rates of Gbps and to meet the insatiable capacity demands in a commercially viable manner, contiguous bandwidths in the order of GHz are required. While such contiguous bandwidths are hardly available in frequencies below 10 GHz (where spectrum is highly fragmented), large monetizable chunks of unused spectrum resources exist in millimeter-wave (mmWave) bands. Recognizing these business and technology drivers, mmWave communications are adopted in 5G \cite{Roh14,And17} and future 6G systems \cite{Gio20}.

Communications in mmWave bands over relatively long distances present challenges due to unfavourable propagation and atmospheric absorption characteristics, however, they also open up for new solutions \cite{Akd14}. Most importantly, they allow to use advanced adaptive array technologies and thereby to achieve substantial beamforming gains, which boost link budgets. To take advantage of directional communications, mmWave communication systems employ sophisticated beam sweeping, measurement and reporting schemes to constantly monitor the direction of transmission (characterized by the \ac{AoA} and \ac{AoD} of the transmitter (Tx) and receiver (Rx) beams) of each potential link \cite{Gio19b,Enescu20}. In practice, this process is carried out utilizing predefined codebooks of directions (identified by a pair of beamforming and combining vectors) that cover the entire angular space between Tx and Rx nodes. To improve the search efficiency among the candidate beam directions, hierarchical codebooks combining an initial coarse beam search with subsequent finer ones are often used \cite{Alk14,Xia16,Xia18}. 

The differences between sub-10 GHz and mmWave bands regarding propagation conditions and required antenna sizes have positioned channel estimation in mmWave channels as an active research field both in academia and industry. Traditional \ac{MIMO} transceiver architectures enable fully-digital  processing through the allocation of one \ac{RF} chain per antenna, which is difficult to realize at mmWave frequencies when large antenna arrays are used \cite{And17}. To overcome the constraints on the number of \ac{RF} chains, the less costly solution is to perform purely \ac{ABF}, where the processing occurs using a single \ac{RF} chain with networks of phase shifters \cite{Zha16,Rat19,Sin20}. In the first releases of the 3GPP New Radio specifications, \ac{ABF} has been shown to provide high rates to a single user and to effectively combat the high path loss caused by mmWave frequencies \cite{Asplund}. As an alternative, \ac{HBF} architectures divide the precoding/combining between the analog and digital domains, utilizing a number of \ac{RF} chains which is lower than the number of antennas \cite{Mo17:2}. The main advantage of \ac{HBF} architectures over \ac{ABF} ones is the possibility to spatially multiplex users and/or streams with a reduced implementation cost with respect to fully-digital architectures. By considering \ac{ABF} and \ac{HBF}, the sparse nature and parametric structure of the mmWave channel has been widely exploited, often taking into account the constraints of using \ac{ABF} \cite{Alk14, Men15, Mo17:1, Mo17:2, Son19}. Schemes relying on \ac{CS} \cite{Son18, Che17, Son19}, such as the methods based on \ac{OMP} \cite{lee2014, Dua11}, have received significant attention. Additionally, \ac{LS}-based approaches \cite{Son18, Son19} have been proposed for beam alignment in mmWave. Unfortunately, under low \ac{SNR} conditions, the channels recovered by \ac{CS} approaches tend to be overwhelmed by noise, leading to degraded performance \cite{Baj10}. Additionally, the accuracy and complexity are high due to the size of the employed dictionaries. The methods in \cite{Mon15,Yi20,Fan18} estimate the channel in the frequency domain after applying a 2D-\ac{DFT} to an initial channel estimate and estimating the \ac{DFT} peaks through iterative cancellation. The main drawback of the latter methods is that they require a high number of \ac{DFT} points, leading to high complexities.

Closely related to the channel estimation problem and highly relevant for system design is \ac{DoA} estimation and tracking in the presence of mobility, a critical issue in future challenging applications such as vehicle-to-anything communications \cite{Gio20}. Several recent works have developed techniques based on \ac{KF} or particle filters that exploit the correlation structure among subsequent mmWave channel realizations under different assumptions~\cite{Zha16,Va16, Jay17,Pai19,Lim19}. However, tracking methods lose accuracy for increased angle deviations and require the support of an external channel estimator to initialize tracking and to re-estimate when there is a sudden change. Besides, these methods are computationally expensive due to matrix inversions and evaluations of derivatives in all the codebook elements.

More recently, machine learning methods have emerged as a powerful tool for addressing various problems in wireless communications \cite{Qin19}. Focusing on channel estimation, a denoising \ac{NN} is employed in an iterative channel estimation scheme in \cite{He18}, while other deep learning architectures have been proposed to reduce the necessary CSI feedback overhead in massive \ac{MIMO} systems \cite{Wen18, Wan20}. Other approaches have been proposed using convolutional \acp{NN} \cite{Don19} or Bayesian learning \cite{Che19}. Most approaches in this family consider a supervised learning framework, where large and carefully labeled datasets are needed for the specific task to be addressed, which renders these schemes difficult to generalize.

In this paper, we propose a novel approach to mmWave channel estimation through \ac{ABF}, which we refer to as \ac{TSDCE}, based on the idea that the mmWave propagation environment between a Tx-Rx pair can be suitably characterized by an observation matrix capturing the channel characteristics over the Tx-Rx codebooks. By selecting an appropriately ordered codebook of \ac{RF} beamforming vectors, the observation matrix corresponds to the 2D-\ac{DFT} of a sum of complex sinusoids in \ac{AWGN} -- referred to as the transformed spatial domain -- with each such sinusoid characterizing an angular component of the multipath channel between the Tx and Rx nodes. The key aspect is that, recognizing such interpretation, a submatrix of the spatial domain observation corresponds to a noisy version of the channel. Interestingly, this noisy version can be more reliably estimated by identifying the spatial frequencies constituting each path, with a direct correspondence to their associated \ac{AoA} and \ac{AoD}. Specifically, the key contributions of this work are as follows:
\begin{itemize}
\item Lemma 1, Lemma 2 and Theorem 1, which together illustrate the foundations and motivation to estimate the mmWave channel in the transformed spatial domain.
\item Algorithm 1, which describes the steps of the proposed TSDCE method.
\item Lemma 3, Lemma 4 and Theorem 2, which provide the upper and lower bound analyses of the method.
\item Performance and complexity analysis of the proposed algorithm. Comparison with several baseline mmWave channel estimation schemes.
\end{itemize}

Our analysis and numerical results indicate that there are several advantages of treating the mmWave channel estimation problem in the transformed spatial domain. First, in contrast to DFT-based benchmarks, the performance of the method does not saturate at high \acp{SNR} and approximates the \ac{CRLB}. Second, its computational complexity is remarkably smaller than other widely used approaches. Finally, the presented scheme is independent of angular deviations and does not rely on an initial channel estimation, as opposed to \ac{KF}-based tracking approaches, which makes it particularly well-suited to typical mmWave environments, in which abrupt changes in the channel gains due to sudden blockages often occur.

The rest of the paper is structured as follows. Section~\ref{Sec:Model} and Section~\ref{Sec:Spatial} describe the system model and the channel estimation problem in the transformed spatial domain, respectively. Section~\ref{Sec:ChannelEst} describes the proposed \ac{TSDCE}, while Section~\ref{sec:bounds} derives the upper and lower performance bounds of the proposed algorithm. Section~\ref{Sec:Complexity} analyzes its performance and complexity. Finally, Section~\ref{Sec:Conc} summarizes the main insights and concludes the paper.

\emph{Notations}: Bold uppercase $\mathbf{A}$ denotes a matrix and bold lowercase $\mathbf{a}$ denotes a column vector. Superscripts $^*$, $^T$, $^H$ and $^{-1}$ denote conjugate, transpose, conjugate transpose and inverse of a matrix, respectively. $\mathrm{vec}(\mathbf{A})$ is a vector obtained through the vectorization of matrix $\mathbf{A}$. $\mathbf{I}_N$ denotes the $N\times N$ identity matrix and $\bm{1}_N$ and $\bm{0}_N$ stand for all-ones and all-zeros $N$-length column vectors, respectively. The symbols $\otimes$ and $\odot$ indicate Kronecker and Hadamard products, respectively. Operators $\mathrm{DFT}_{\mathrm{2D}}\left\lbrace \cdot \right\rbrace$ and $\mathrm{IDFT}_{\mathrm{2D}}\left\lbrace \cdot \right\rbrace$ perform the \ac{DFT} and the \ac{IDFT} two-dimensional operations.  $[\mathbf{A}]_{q,p}$ is the $(q,p)$-th entry of $\mathbf{A}$.  $\left\| \mathbf{A} \right\|_{F}$ is the Frobenius norm. The magnitude and phase of a complex number are denoted by $|\cdot|$ and $\angle(\cdot)$, respectively. $\mathbb{E}\left\{\cdot\right\}$ is the expectation operator and $\mathrm{rank}(\cdot)$ stands for the matrix rank. $\lambda_i(\mathbf{A})$ denotes the $i$-th magnitude-descendent eigenvalue of $\mathbf{A}$. $\mathcal{CN}(m,\sigma^{2})$ is a complex Gaussian random variable with mean $m$ and variance $\sigma^2$. Finally, $\mathbb{C}$ and $\mathbb{R}^{+}$ denote the set of complex and positive real numbers, respectively, while $\mathrm{Re}\left\{\cdot\right\}$ refers to real part of a complex number.

\section{System model} \label{Sec:Model}
In this section, we introduce the system model for mmWave communications and the procedure for conventional codebook-based training to construct the observation matrix.

\subsection{Millimeter Wave Channel and Signal Model}
Let us consider a single-user mmWave geometric channel where the Tx and Rx are both equipped with uniform linear arrays with $n_t$ and $n_r$ antennas, respectively. As in \cite{Aya12,Zha16,Alk14}, the channel is characterized by $L$ scatterers, each one contributing a single propagation path between the Tx and Rx. Defining by $\alpha_l$ the complex channel coefficient affecting the $l$-th path, $l=1,\ldots,L$, and by $\psi_l$ and $\phi_l$ the \ac{AoA} and \ac{AoD} of the $l$-th path, respectively, the channel model depends on the parameter vector $\bm{\theta}~\triangleq~[|\alpha_1|,\angle{\alpha_1},\phi_1,\psi_1,\dots,|\alpha_L|,\angle{\alpha_L},\phi_L,\psi_L]^{T}$. The parametric channel model $\mathbf{H}(\bm{\theta})~\in~\mathbb{C}^{n_r \times n_t}$ is then defined by
\begin{equation}
    \mathbf{H}(\bm{\theta}) = \sqrt{n_t n_r}\sum_{l=1}^{L}\alpha_l \mathbf{a}_r(\psi_l)\mathbf{a}_t^H(\phi_l).\label{eq:channel}
\end{equation}
Without loss of generality, we further assume that the average power gain is equally balanced among the $L$ paths, so that the complex channel coefficients are modeled as independent identically distributed (i.i.d.) random variables with distribution $\alpha_l~\sim~ \mathcal{CN}(0,\sigma^{2}_{\alpha}/L)$. \ac{AoA} and \ac{AoD} are modeled as uniformly distributed random variables $\psi_l,\phi_l \in [0,2\pi]$. By assuming that the antenna separation is one half of the system operating wavelength, the antenna array responses at the Tx and Rx can be respectively expressed as
\begin{eqnarray}
    \mathbf{a}_t(\phi_l) = \frac{1}{\sqrt{n_t}}[1,\,e^{-j\pi \cos\phi_l},\cdots,e^{-j\pi (n_t-1)\cos\phi_l}]^T,
    \label{eq:stvectort}
    \\
    \mathbf{a}_r(\psi_l)=\frac{1}{\sqrt{n_r}}[1,e^{-j\pi \cos\psi_l},\cdots,e^{-j\pi (n_r-1)\cos\psi_l}]^T.
    \label{eq:stvectorr}
\end{eqnarray}
Although the the actual number of paths $L$ constituting the channel may be unknown a priori, measurements at mmWave have demonstrated that the channel at these frequencies is highly sparse, meaning that the value of $L$ is generally low \cite{Akd14}. Note that the dependence of Eq.~(\ref{eq:channel}) on the parameter vector $\bm{\theta}$ will be omitted in what follows for the sake of notation simplicity. 

We consider a mmWave system using purely \ac{ABF}, where the Tx and Rx antennas are connected to a single \ac{RF} chain through a network of digitally controlled phase-shifters. As in previous works \cite{Zha16}, we assume that the beam search space is represented by a codebook
containing a set of $P$ and $Q$ codewords or directions at the Tx
and Rx side, respectively, leading to quantized angles $\bar{\phi}_p$, $p=0,\,1,\ldots,P-1$ and $\bar{\psi}_q$, $q=0,\, 1,\ldots,Q-1$. Then, a pilot-based training phase is carried out for subsequent channel estimation. More specifically, a pilot symbol $\mathrm{s}$ is transmitted and received through all the possible directions at each side. If the Tx uses the RF beamforming vector $\mathbf{f}_p \in \mathbb{C}^{n_t\times 1}$, and the Rx employs the RF combining vector $\mathbf{w}_q\in \mathbb{C}^{n_r\times 1}$, the resulting signal for each pair of directions $(q,p)$ can be written as
\begin{equation}
    y_{q,p}=\sqrt{\rho}\, \mathbf{w}_q^H\mathbf{H}\mathbf{f}_p\,\mathrm{s} + \mathbf{w}_q^H\mathbf{n},
\end{equation}
where $\rho \in \mathbb{R}^{+}$ is the transmit power, and $\mathbf{n} \sim \mathcal{CN}(0,\bm{\Sigma}_{\mathbf{n}})$ is a complex \ac{AWGN} with covariance $\bm{\Sigma}_{\mathbf{n}}=\sigma^2_{n}\mathbf{I}_{n_r}$. The symbol $\mathrm{s}$ is set to 1 for simplicity in what follows, and the system \ac{SNR} is given by $\frac{\rho}{\sigma_{n}^2}$.

After transmitting the pilot through the $Q\times P$ direction combinations, and letting $\mathbf{W} = [\mathbf{w}_0,\ldots,\mathbf{w}_{Q-1}]\in \mathbb{C}^{n_r\times Q}$ and $\mathbf{F}~=~[\mathbf{f}_0,\ldots,\mathbf{f}_{P-1}]\in\mathbb{C}^{n_t\times P}$, the following observation matrix is obtained
\begin{equation}\label{eq:observation}
    \mathbf{Y} = \sqrt{\rho}\, \mathbf{W}^{H}\mathbf{H}\mathbf{F} + \mathbf{N} =\sqrt{\rho}\,  \mathbf{G}(\bm{\theta}) + \mathbf{N},
\end{equation}
where the noise $\mathbf{N}\in \mathbb{C}^{Q\times P}$ contains i.i.d. $\sim\mathcal{CN}(0,\sigma^2_n)$ elements  and $\mathbf{G} \in \mathbb{C}^{Q\times P}$ encodes channel information $\bm{\theta}$.\footnote{The same observation matrix could be constructed using a mmWave \ac{HBF} architecture with $N_{RF}$ RF chains, where each transmitted pilot could be simultaneously received through $N_{RF}$ out of the $Q$ directions to test, as described by equations (1)-(3) in \cite{lee2014}.} By separating the effect of the different scatterers, $\mathbf{Y}$ can be equivalently written as a sum of path contributions $\mathbf{G}^{(l)}(\bm{\theta}_l)\in \mathbb{C}^{Q\times P}$, each one dependent on a parameter vector $\bm{\theta}_l~=~[|\alpha_l|,\angle{\alpha_l},\phi_l,\,\psi_l]^{T}$
\begin{equation}
    \mathbf{Y} = \sqrt{\rho}\sum_{l = 1}^{L}\mathbf{G}^{(l)}(\bm{\theta}_l) + \mathbf{N}.
    \label{eq:Ymodel1}
\end{equation}

As in previous works \cite{Zha16}, if the beamforming/combining vectors are designed to match the array response, i.e. $\mathbf{f}_p~=~\mathbf{a}_t(\bar{\phi}_p)$ and $\mathbf{w}_q~=~\mathbf{a}_r(\bar{\psi}_q)$, the elements $g_{q,p}^{(l)} = \left[\mathbf{G}^{(l)}(\bm{\theta}_l) \right]_{q,p}$ are given by

\begin{equation}
     g_{q,p}^{(l)}(\bm{\theta}_l) = A_l \frac{1-e^{-j\pi n_r (\cos\psi_l - \cos\bar{\psi}_q)}}{1-e^{-j \pi (\cos\psi_l - \cos\bar{\psi}_q)}}\frac{1-e^{j\pi n_t (\cos\phi_l - \cos\bar{\phi}_p)}}{1-e^{j \pi (\cos\phi_l - \cos\bar{\phi}_p)}},
     \label{eq:gcomponent}
\end{equation}
where $A_l = \frac{\alpha_l}{\sqrt{n_t n_r}}$. Finally, note that since the observation is only sensitive to $\cos\phi_l$ and $\cos\psi_l$, while the actual \ac{AoD}s and \ac{AoA}s cover the range $[0,2\pi]$, the quantized angles $\bar{\phi}_p$ and $\bar{\psi}_q$ only need to consider the range $[0,\pi]$.
\subsection{Least Squares Channel Estimation}
A straightforward \ac{LS} solution for the channel estimation problem can be derived by vectorizing the observation matrix $\mathbf{Y}$ (see Eq.~(\ref{eq:observation})) in the following form
\begin{equation}
    \mathbf{y} = \mathrm{vec}(\mathbf{Y})=\sqrt{\rho}\,\mathbf{Q} \mathrm{vec}(\mathbf{H}) + \mathrm{vec}(\mathbf{N}),
\end{equation}
where $\mathbf{Q} = \mathbf{F}^{T} \otimes \mathbf{W}^{H} \in \mathbb{C}^{QP \times n_t n_r}$. Then, the \ac{LS} estimator can be expressed as
\begin{equation}
    \mathrm{vec}(\mathbf{\hat{H}}_{\mathrm{LS}}) = \frac{1}{\sqrt{\rho}}(\mathbf{Q}^{H}\mathbf{Q})^{-1}\mathbf{Q}^{H}\mathbf{y},
\end{equation}
where $\mathbf{Q}^{H}\mathbf{Q}$ has full rank only for $QP \geq n_t n_r$.

\section{Spatial Domain Interpretation}\label{Sec:Spatial}

The proposed method relies on the fact that, under a proper design of a \ac{DFT}-based codebook for \ac{ABF}, the observation matrix $\mathbf{Y}$ corresponds to the 2D-\ac{DFT} of a sum of windowed complex sinusoids embedded in \ac{AWGN}. This motivates an interpretation of the problem in the transformed spatial domain. This section discusses the selected codebook structure and analyzes the form of the observation matrix in such domain.

\subsection{\ac{DFT}-based Codebook}
Let us analyze further the elements of path components $\mathbf{G}^{(l)}(\bm{\theta}_l)$. Eq.~(\ref{eq:gcomponent}) can be written as
\begin{equation}
    g_{q,p}^{(l)}(\bm{\omega}_l) = A_l\frac{1-e^{-j\left(\omega_q -\omega_{\psi_l} \right)n_r}}{1-e^{-j\left(\omega_q -\omega_{\psi_l} \right)}}\frac{1-e^{-j\left(\omega_p -\omega_{\phi_l} \right)n_t}}{1-e^{-j\left(\omega_p -\omega_{\phi_l} \right)}},
    \label{eq:gfreq}
\end{equation}
where 
\begin{eqnarray}
\omega_q = -\pi \cos(\bar{\psi}_q),\quad &&\omega_p = \pi \cos(\bar{\phi}_p),\\
\omega_{\psi_l} = -\pi \cos(\psi_l), \quad &&\omega_{\phi_l} = \pi \cos(\phi_l).\label{eq:omegaphil}
\end{eqnarray}

It can be shown that Eq.~\eqref{eq:gfreq} corresponds to the 2D-\ac{DFT} with $Q\times P$ bins of a windowed complex sinusoid provided that 
\begin{eqnarray}
    e^{j\omega_q} &=& e^{j\frac{2\pi}{Q}q}, \quad q = 0,1,\dots,Q-1, 
    \label{eq:cond1}
    \\
    e^{j\omega_p} &=& e^{j\frac{2\pi}{P}p}, \quad p = 0,1,\dots,P-1.
    \label{eq:cond2}
\end{eqnarray}

Indeed, when the above relationships hold, it follows
\begin{equation}
\begin{split}
    g_{q,p}^{(l)}(\bm{\omega}_l)  &= \sum_{m = 0}^{n_r-1}\sum_{n=0}^{n_t-1} A_l e^{j\left(\omega_{\psi_l}m + \omega_{\phi_l}n \right)} e^{-j2\pi(\frac{q}{Q}m + \frac{p}{P}n)} \\
    &=  A_l \frac{1-e^{-j\left(\omega_q -\omega_{\psi_l} \right)n_r}}{1-e^{-j\left(\omega_q -\omega_{\psi_l} \right)}} \frac{1-e^{-j\left(\omega_p -\omega_{\phi_l} \right)n_t}}{1-e^{-j\left(\omega_p -\omega_{\phi_l} \right)}},
\end{split}
\end{equation}
which indicates that $g^{(l)}_{q,p}$ is the $(q,p)$ coefficient of the 2D-DFT of $A_l e^{j\left(\omega_{\psi_l}m + \omega_{\phi_l}n \right)}$. The variables $\omega_{\psi_l}, \omega_{\phi_l} \in [-\pi,\pi]$ denote the frequencies of such complex sinusoid in each spatial direction, where the vertical direction is related to the \ac{AoA} and the horizontal direction to the \ac{AoD}. Note that the dependency on $\bm{\theta}_l$ in Eq.~\eqref{eq:gcomponent} has been changed to $\bm{\omega}_l = [|\alpha_l|,\angle{\alpha_l},\omega_{\phi_l},\, \omega_{\psi_l}]^{T}$ in Eq.~\eqref{eq:gfreq} to emphasize the focus on the spatial frequencies rather than on the \ac{AoD}s and \ac{AoA}s.

To satisfy Eqs.~(\ref{eq:cond1}) and (\ref{eq:cond2}), we impose proper conditions on the selected codebook angles as follows
\begin{equation}
    \cos(\bar{\phi}_p) = \mathcal{W}_{[-1,1]}\left(\frac{2p}{P}\right), \quad
    \cos(\bar{\psi}_q) = \mathcal{W}_{[-1,1]}\left(-\frac{2q}{Q}\right),
\end{equation}
where $\mathcal{W}_{[a,b]}(x)\triangleq x - (b-a)\left\lceil \frac{x-b}{b-a} \right\rceil$ is the $[a,\,b]$ wrapping operator with $\left\lceil \cdot \right\rceil$ denoting the ceiling function. The above conditions imply simultaneously a uniform quantization in the range $[-1,\,1]$ for the cosine of the codebook angles and a specific codebook ordering at the Tx and the Rx. 

\subsection{Sinusoidal Path Components}

By considering the above codebook design, the path components 
can be expressed as
\begin{equation}
    \mathbf{G}^{(l)}(\bm{\theta}_l) = \mathrm{DFT}_{\mathrm{2D}}\left\lbrace \mathbf{C}^{(l)}(\bm{\omega}_l)  \right\rbrace,\ \forall l, 
    \label{eq:GDFT}
\end{equation}
where $\mathbf{C}^{(l)}(\bm{\omega}_l)\in\mathbb{C}^{Q\times P}$ are the corresponding spatial domain equivalents, with elements $ c_{m,n}^{(l)}(\bm{\omega}_l) = [\mathbf{C}^{(l)}(\bm{\omega}_l)]_{m,n}$ given by
\begin{equation}
   c_{m,n}^{(l)}(\bm{\omega}_l) = \begin{cases}
    A_le^{j\left(\omega_{\psi_l}m + \omega_{\phi_l}n  \right)},& \text{if } m<n_r,\, n<n_t \\
    0,              & \text{elsewhere}
    \end{cases}.\label{eq:cmn}
\end{equation}

Eq.~\eqref{eq:cmn} clearly reflects the correspondence of $\mathbf{C}^{(l)}(\bm{\omega}_l)$ to a windowed complex sinusoid. In what follows, the dependence on $\bm{\omega}_l$ will be omitted for the sake of notation simplicity. Note that the indices of the elements of $\mathbf{C}^{(l)}$ are denoted as $(m,n)$ to make clearer their correspondence to the vertical and horizontal spatial directions $m~=~0,\dots,Q-1$ and $n~=~ 0,\dots,P-1$, respectively. The windowing effect can be alternatively expressed in matrix notation by defining a binary masking matrix $\mathbf{B}~\in~ \mathbb{N}^{Q \times P}$ applied over a full (non-windowed) cisoid matrix $\mathbf{T}^{(l)}$
\begin{equation}
     \mathbf{C}^{(l)}  =  \left(\alpha_l\mathbf{c}(\omega_{\psi_l})\mathbf{c}(\omega_{\phi_l})^{H}\right)\odot\left( \mathbf{b}_{n_r}\mathbf{b}_{n_t}^{T}\right) 
      = \mathbf{T}^{(l)}\odot \mathbf{B}, 
\end{equation}
where $\mathbf{b}_{n_r} \triangleq [\bm{1}_{n_r}^{T},\, \bm{0}_{Q-n_r}^{T}]^{T}$,
$\mathbf{b}_{n_t} \triangleq [\bm{1}_{n_t}^{T},\, \bm{0}_{P-n_t}^{T}]^{T}$, and
\begin{eqnarray}
&&\mathbf{c}(\omega_{\psi_l}) \triangleq \frac{1}{\sqrt{n_r}} [1,\, e^{j\omega_{\psi_l}},\dots,e^{j(Q-1)\omega_{\psi_l}}]^{T}, \label{eq:cvectors1}\\
&&\mathbf{c}(\omega_{\phi_l}) \triangleq \frac{1}{\sqrt{n_t}} [1,\, e^{j\omega_{\phi_l}},\dots,e^{j(P-1)\omega_{\phi_l}}]^{T}.\label{eq:cvectors2}
\end{eqnarray}

An example of the magnitude of one path component and its spatial equivalent is shown in Fig.~\ref{fig:frequencies}, where both the windowing effect and the spatial sinusoidal pattern are clearly observed.

\begin{figure*}[t]
\begin{center}
\includegraphics[width=0.8\textwidth]{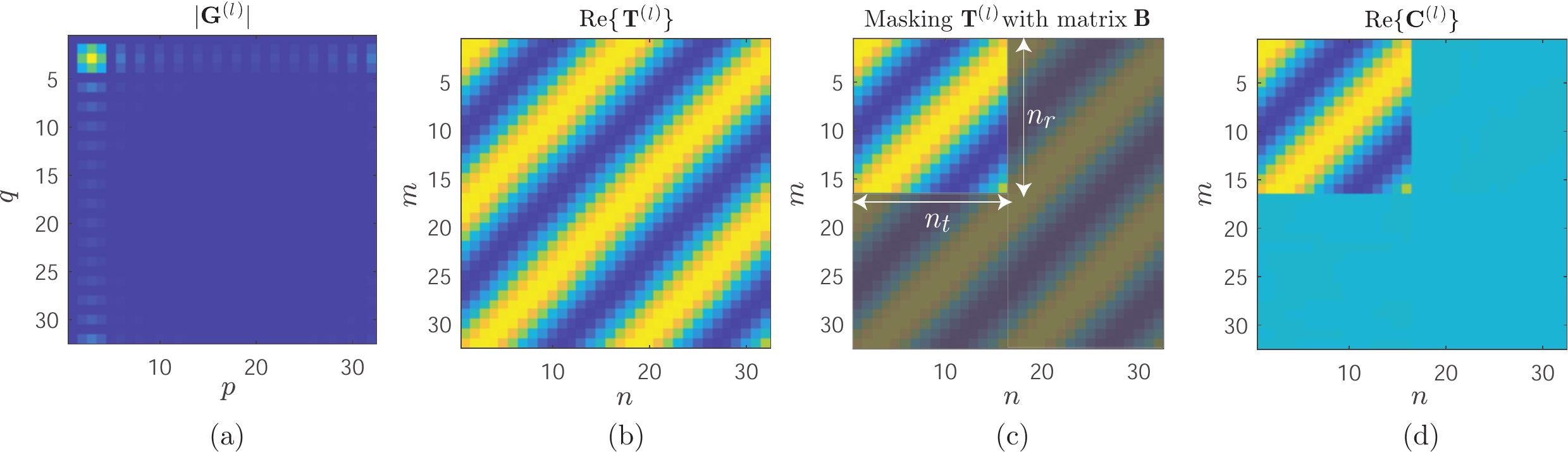}
\caption{{Example illustrating the dual interpretation of one path component in the angular (a) and spatial (d) domains, with $n_t = n_r = 16$ and $P = Q = 32$. The \ac{AoD}/\ac{AoA} are set to match two quantized angles: $\phi_1 = \bar{\phi}_3$, $\psi_1 = \bar{\psi}_3$.} }\label{fig:frequencies} 
\end{center}
\end{figure*}

\subsection{Observation Matrix in the Transformed Spatial Domain}

Let us write the observation matrix $\mathbf{Y}$ in the transformed spatial domain by considering Eq.~(\ref{eq:Ymodel1}) and Eq.~(\ref{eq:GDFT})
\begin{equation}
    \mathbf{D} = \mathrm{IDFT}_{\mathrm{2D}}\left\lbrace \mathbf{Y} \right\rbrace =  \sqrt{\rho}\,\mathbf{C}(\bm{\omega}) + {\mathbf{Z
    }}=\sqrt{\rho} \sum_{l=1}^{L}\mathbf{C}^{(l)} + \mathbf{Z},
    \label{eq:Sspatial}
\end{equation}
where the full parameter vector, equivalent to $\bm{\theta}$, is given by $\bm{\omega} \triangleq [\bm{\omega}_{1}^T,\bm{\omega}_{2}^T,\dots,\bm{\omega}_{L}]^{T}$ and $\mathbf{Z} ~=~\mathrm{IDFT}_{\mathrm{2D}}\lbrace \mathbf{N} \rbrace\in \mathbb{C}^{Q\times P}$ is a noise matrix with i.i.d.
elements corresponding to zero-mean complex Gaussian noise with variance $\sigma_z^2~=~\frac{1}{QP}\sigma^2_n$ \cite{Thi14}.

By considering the effect of the masking matrix $\mathbf{B}$ and its logical negation $\neg{\mathbf{B}}$, the spatial domain observation can be alternatively expressed as
\begin{equation}
    \mathbf{D} =
    \underbrace{
   \sqrt{\rho}\sum_{l=1}^{L}\mathbf{T}^{(l)}\odot \mathbf{B} + \mathbf{Z}\odot \mathbf{B}}_{\mathbf{D}_{\mathbf{C}}} + \underbrace{\mathbf{Z}\odot\mathbf{\neg{B}}}_{\mathbf{D}_{\mathbf{Z}}}.
   \label{eq:S_Sc_Sz}
\end{equation}

The above two differentiated terms allow to write $\mathbf{D}$ as the composition of two non-overlapping parts, defining the full observation as the union of two disjoint sets containing the non-zero elements of $\mathbf{D}_{\mathbf{C}}$ and $\mathbf{D}_{\mathbf{Z}}$
\begin{eqnarray}
\mathcal{D}_{\mathbf{C}} &=& \left\{[\mathbf{D}]_{m,n}: \quad  [\mathbf{B}]_{m,n}=1 \right\}, \\
\mathcal{D}_{\mathbf{Z}} &=& \left\{[\mathbf{D}]_{m,n}: \quad  [\mathbf{B}]_{m,n}=0 \right\},
\end{eqnarray}
i.e. $\mathcal{D} \triangleq \mathcal{D}_{\mathbf{C}} \cup \mathcal{D}_{\mathbf{Z}}$ and $ \mathcal{D}_{\mathbf{C}} \cap \mathcal{D}_{\mathbf{Z}}=\emptyset$.

Fig.~\ref{fig:codebook_size} illustrates the original and transformed observations for three cases (with $L = 1$) corresponding to codebooks of increasing sizes, namely $P=Q=16$, $P=Q=32$ and $P=Q=64$, keeping the number of Tx and Rx antennas $n_t=n_r=16$ and the noise power fixed in all cases. Each column shows the magnitudes of the noiseless observation $\mathbf{G}$, of the noisy observation $\mathbf{Y}$ and of its spatial domain counterpart $\mathbf{D}$, respectively. The last row reflects the location of the above sets within $\mathbf{D}$. In the particular case of a matching number of antennas and codebook size (first column of Fig.~\ref{fig:codebook_size}), the set $\mathcal{D}_{\mathbf{Z}}$ is empty. Note that the signal information concentrates on the $n_r\times n_t$ submatrix from the top-left corner. This submatrix contains the informative part of $\mathbf{D}$, and it is denoted in what follows as  $\mathbf{\bar{D}}_{\mathbf{C}}\in \mathbb{C}^{n_r \times n_t}$. Finally, note that when $P>n_t$ or $Q>n_r$, an estimate of the noise variance can be directly obtained from the elements contained in $\mathcal{D}_{\mathbf{Z}}$ as
\begin{equation}
    \hat{\sigma}^{2}_z = \frac{1}{QP}\hat{\sigma}_n^{2} =  \frac{1}{|\mathcal{D}_{\mathbf{Z}}|}\sum_{d_{m,n} \in \mathcal{D}_{\mathbf{Z}}}|d_{m,n}|^2,
    \label{eq:varnoise1}
\end{equation}
where $|\mathcal{D}_{\mathbf{Z}}|$ denotes the cardinality of the set and $d_{m,n}=[\mathbf{D}]_{m,n}$.

\begin{figure}[t]
\begin{center}
\includegraphics[width =\columnwidth]{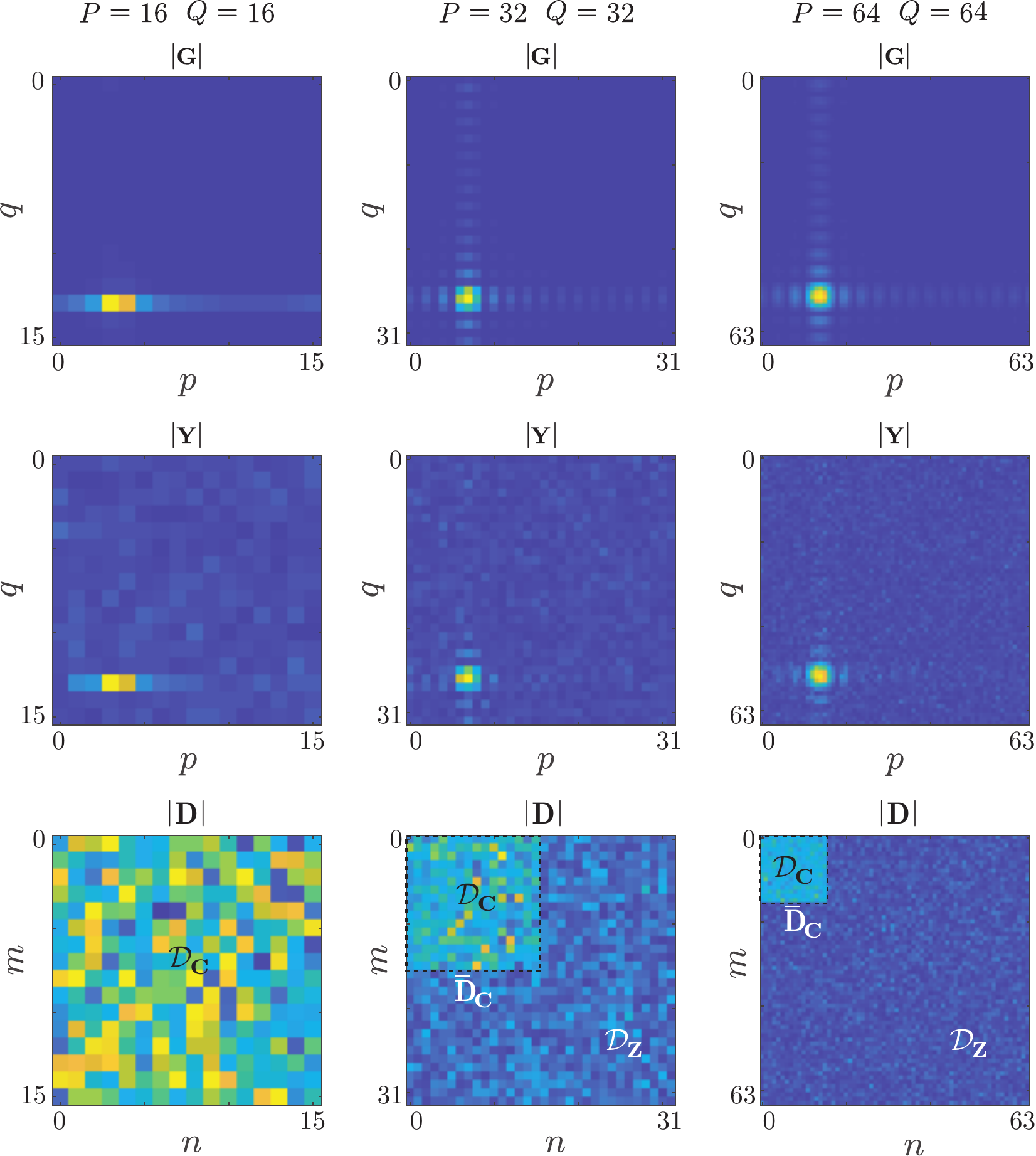}
\caption{{Effect of codebook size for $L=1$ and $n_t = n_r = 16$ with $\mathrm{SNR}=0$ dB. } }\label{fig:codebook_size} 
\end{center}
\end{figure}

\begin{lemma} \label{lem:snrc}
The SNR in the informative part of the spatial domain observation is scaled by a factor $\frac{QP}{n_t n_r}$.
\end{lemma}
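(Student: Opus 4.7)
The plan is to compute the per-element signal-to-noise ratio in the informative submatrix $\bar{\mathbf{D}}_{\mathbf{C}}$ and in the original observation $\mathbf{Y}$ separately, and then show that their ratio collapses to the claimed factor. Each quantity splits into an expected per-element signal power over a per-element noise variance, so I would handle the noise and signal sides one at a time.

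On the noise side the work is already done by the paper: since $\mathbf{Z}=\mathrm{IDFT}_{\mathrm{2D}}\{\mathbf{N}\}$ has i.i.d.\ $\mathcal{CN}(0,\sigma_n^2/(QP))$ entries, every entry of $\bar{\mathbf{D}}_{\mathbf{C}}$ carries noise of variance $\sigma_n^2/(QP)$, whereas each entry of $\mathbf{Y}$ carries noise of variance $\sigma_n^2$. This alone gives a factor $QP$ in favour of the spatial domain.

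On the signal side, Eq.~(\ref{eq:cmn}) tells me that the noiseless $(m,n)$ entry of $\bar{\mathbf{D}}_{\mathbf{C}}$ is $\sqrt{\rho}\sum_{l=1}^{L}A_l\,e^{j(\omega_{\psi_l}m+\omega_{\phi_l}n)}$ with $A_l=\alpha_l/\sqrt{n_t n_r}$. Independence, zero mean and $\mathbb{E}[|\alpha_l|^2]=\sigma_\alpha^2/L$ wipe out all inter-path cross terms, leaving a per-element signal power $\rho\sum_l \mathbb{E}[|A_l|^2]=\rho\sigma_\alpha^2/(n_t n_r)$ that does not depend on $(m,n)$. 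For $\mathbf{Y}$ I would use Parseval applied to Eq.~(\ref{eq:GDFT}): since $\mathbf{C}^{(l)}$ is supported only on the $n_r\times n_t$ block, $\|\mathbf{G}^{(l)}\|_F^2=QP\cdot n_t n_r|A_l|^2=QP|\alpha_l|^2$, so after averaging the per-element signal power of $\sqrt{\rho}\mathbf{G}$ is $\rho\sigma_\alpha^2$. The signal-power ratio is therefore $1/(n_t n_r)$, this time in favour of the original domain.

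Multiplying the noise and signal ratios yields $\mathrm{SNR}_{\bar{\mathbf{D}}_{\mathbf{C}}}/\mathrm{SNR}_{\mathbf{Y}}=QP/(n_t n_r)$, as claimed. The only delicate step is the vanishing of the cross-path contributions in both signal-power computations; it is immediate from the i.i.d.\ zero-mean model for $\alpha_l$ stated below Eq.~(\ref{eq:channel}), but it has to be invoked explicitly, since a naive expansion would leave $\mathcal{O}(L^2)$ deterministic sinusoidal cross terms and break the clean scaling.
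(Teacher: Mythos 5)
Your proposal is correct and its core is the same computation as the paper's: the per-element signal power in $\mathbf{\bar{D}}_{\mathbf{C}}$ is $\rho\,\mathbb{E}\{\sum_l |A_l|^2\}=\rho\sigma_\alpha^2/(n_t n_r)$ (cross-path terms vanishing by the i.i.d.\ zero-mean model, as you note), the noise variance is $\sigma_n^2/(QP)$, and their ratio gives $\mathrm{SNR}\cdot QP/(n_t n_r)$ under $\sigma_\alpha^2=1$. The one genuine difference is that the paper simply compares this against the system SNR $\rho/\sigma_n^2$ as defined in the model, whereas you additionally verify via Parseval that the per-element SNR of $\mathbf{Y}$ itself equals $\rho\sigma_\alpha^2/\sigma_n^2$; this extra step is not in the paper but makes the phrase ``scaled by'' precise as a ratio of two per-element SNRs, at the cost of a slightly longer argument.
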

\begin{proof}
Let us consider the informative part of the observation matrix $\mathbf{D}$ consisting of the non-zero elements of $\mathbf{D}_{\mathbf{C}}$, denoted as $\mathbf{\bar{D}}_{\mathbf{C}}$, which can be expressed as
\begin{equation}
   \mathbf{\bar{D}}_{\mathbf{C}} = \sqrt{\rho}\,\mathbf{\bar{C}}(\bm{\omega}) + {\mathbf{\bar{Z
    }}} =  \sqrt{\rho}\sum_{l=1}^{L}{\mathbf{\bar{C
    }}}^{(l)} + {\mathbf{\bar{Z
    }}},
    \label{eq:barDc}
\end{equation}
where ${\mathbf{\bar{C}}}^{(l)}\in \mathbb{C}^{n_r \times n_t}$ and $\mathbf{\bar{Z}}\in \mathbb{C}^{n_r \times n_t}$ are sub-matrices of $\mathbf{C}^{(l)}$ and $\mathbf{Z}$ corresponding to $m < n_r$ and $n< n_t$. Thus, the elements $ \bar{d}_{m,n} = [\mathbf{\bar{D}}_{\mathbf{C}}]_{m,n} = [\mathbf{D}]_{m,n}$,  $m=0,\ldots,n_r-1$, $n=0,\ldots,n_t-1$ are formed by both sinusoidal components and noise, i.e.
\begin{equation}
    \bar{d}_{m,n} = \sqrt{\rho}\sum_{l=1}^{L}A_l e^{j\left(\omega_{\psi_l}m + \omega_{\phi_l}n \right)} + z_{m,n},
    \label{eq:dbarnoise}
\end{equation}
where $z_{m,n} = [\mathbf{Z}]_{m,n}$. The variance for the elements $\bar{d}_{m,n}$ is given by
\begin{equation}
    \mathrm{var}(\bar{d}_{m,n})  = \rho\, \mathbb{E}\left\{\sum_{l=1}^{L}|A_l|^2\right\} + \sigma^{2}_z.
    \label{eq:varsc}
\end{equation}
Therefore, the SNR of $\mathbf{\bar{D}}_{\mathbf{C}}$ is given by
\begin{equation}
\begin{split}
   \mathrm{SNR}_{\mathbf{C}}
     &= \frac{\rho\,\mathbb{E}\left\{\sum_{l=1}^{L} |A_l|^2\right\}}{\sigma^2_z} = \frac{\rho}{\sigma_n^2} \frac{QP}{n_t n_r}\sum_{l=1}^{L}\mathbb{E}\left\{|\alpha_l|^2\right\} 
     \\&= \frac{\rho}{\sigma_n^2} \frac{QP}{n_t n_r} \sum_{l=1}^{L}\frac{\sigma^{2}_{\alpha}}{L} =  \frac{\rho}{\sigma_n^2} \frac{QP}{n_t n_r} \sigma^{2}_{\alpha}.
     \end{split}
\end{equation}

As in previous works \cite{Alk14,lee2014}, by assuming that the average power gain of the paths add up to unit power, i.e.  $\sigma^2_{\alpha}=1$, the following relationship can be established
\begin{equation}
    \mathrm{SNR}_{\mathbf{C}}
    =\frac{\rho}{\sigma_n^2} \frac{QP}{n_t n_r}=  \mathrm{SNR} \cdot \frac{QP}{n_t n_r},
    \label{eq:snrgain}
\end{equation}
which evidences a SNR gain when $QP > n_t n_r$. 
\end{proof}

Lemma \ref{lem:snrc} establishes as well an interesting relationship to intuitively understand the effect of the codebook size and the number of antenna elements on the performance by looking into the way the information is distributed within the transformed spatial domain, as illustrated by  Fig.~\ref{fig:codebook_size}. 
An important remark is that, as $A_l$ does not depend on $Q$ and $P$, while the magnitude of the target signal remains constant, the noise power decreases as the codebook size gets larger, since it gets distributed over the whole space. This leads to the SNR gain for the target signal shown by Eq.~(\ref{eq:snrgain}).

\begin{lemma} \label{lem:channel}
Sub-matrix $\mathbf{\bar{D}}_{\mathbf{C}}$ constitutes a scaled noisy observation of the channel matrix $\mathbf{H}$, with noise variance $1/\mathrm{SNR}_{\mathbf{C}}$.
\end{lemma}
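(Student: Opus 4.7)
The plan is to identify the sub-matrix $\mathbf{\bar{D}}_{\mathbf{C}}$ entry-by-entry with a scaled version of the channel matrix $\mathbf{H}$, and then track how the pre-existing noise variance $\sigma_z^2 = \sigma_n^2/(QP)$ transforms under the rescaling required to obtain an unbiased observation of $\mathbf{H}$.

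First, I would start from the expression for $\bar{d}_{m,n}$ already derived in Eq.~\eqref{eq:dbarnoise} and substitute $A_l = \alpha_l / \sqrt{n_t n_r}$. Simultaneously, I would compute $[\mathbf{H}]_{m,n}$ from the channel model in Eq.~\eqref{eq:channel} together with the array responses in Eqs.~\eqref{eq:stvectort}--\eqref{eq:stvectorr}. Using $\omega_{\psi_l} = -\pi\cos(\psi_l)$ and $\omega_{\phi_l} = \pi\cos(\phi_l)$ from Eq.~\eqref{eq:omegaphil}, the prefactors $\sqrt{n_t n_r}$ cancel the $1/\sqrt{n_t n_r}$ from the array responses, yielding the compact expression $[\mathbf{H}]_{m,n} = \sum_{l=1}^{L}\alpha_l\, e^{j(\omega_{\psi_l}m + \omega_{\phi_l}n)}$ for $0\le m < n_r$ and $0\le n < n_t$.

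Next, a direct comparison with Eq.~\eqref{eq:dbarnoise} shows that
\begin{equation*}
\mathbf{\bar{D}}_{\mathbf{C}} \;=\; \sqrt{\tfrac{\rho}{n_t n_r}}\, \mathbf{H} \;+\; \mathbf{\bar{Z}},
\end{equation*}
i.e., the informative sub-matrix is precisely the channel matrix scaled by $\sqrt{\rho/(n_t n_r)}$ plus the windowed noise $\mathbf{\bar{Z}}$ inherited from $\mathbf{Z} = \mathrm{IDFT}_{\mathrm{2D}}\{\mathbf{N}\}$, whose entries are i.i.d.\ $\mathcal{CN}(0,\sigma_n^2/(QP))$ by the unitarity properties of the normalized 2D DFT used earlier.

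Finally, to exhibit $\mathbf{\bar{D}}_{\mathbf{C}}$ as a noisy observation of $\mathbf{H}$ itself, I would multiply through by $\sqrt{n_t n_r/\rho}$, obtaining an equivalent model $\mathbf{H} + \mathbf{E}$ where the noise entries of $\mathbf{E}$ have variance $\frac{n_t n_r}{\rho}\cdot\frac{\sigma_n^2}{QP} = \frac{n_t n_r\,\sigma_n^2}{\rho\,QP}$. Invoking Lemma~\ref{lem:snrc} (specifically Eq.~\eqref{eq:snrgain}), this quantity equals $1/\mathrm{SNR}_{\mathbf{C}}$, which closes the argument.

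The proof is essentially bookkeeping; no step is genuinely hard. The one place where care is required is the consistency of the sign and scaling conventions: the signs in $\omega_{\psi_l},\omega_{\phi_l}$ must cancel the signs inside $\mathbf{a}_r(\psi_l)\mathbf{a}_t^H(\phi_l)$ so that the exponents in $\mathbf{\bar{D}}_{\mathbf{C}}$ and $\mathbf{H}$ match exactly, and the $\sqrt{n_t n_r}$ normalization in Eq.~\eqref{eq:channel} must be tracked carefully through the definition $A_l = \alpha_l/\sqrt{n_t n_r}$ so that the final scaling factor is truly $\sqrt{\rho/(n_t n_r)}$ and not some unrelated constant. Once these conventions are aligned, the rescaled noise variance coincides with $1/\mathrm{SNR}_{\mathbf{C}}$ by construction.
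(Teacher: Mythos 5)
Your proposal is correct and follows essentially the same route as the paper: write $h_{m,n}=\sum_{l=1}^{L}\alpha_l e^{j(\omega_{\psi_l}m+\omega_{\phi_l}n)}$ from the array-response definitions, observe that $\bar{d}_{m,n}$ is the same sinusoid sum scaled by $\sqrt{\rho/(n_t n_r)}$ plus $\mathbf{\bar{Z}}$, rescale by $\sqrt{n_t n_r/\rho}$, and identify the resulting noise variance $\frac{n_t n_r}{\rho}\frac{\sigma_n^2}{QP}$ with $1/\mathrm{SNR}_{\mathbf{C}}$ via Eq.~(\ref{eq:snrgain}). The only point worth flagging is that the last identification implicitly uses the normalization $\sigma_\alpha^2=1$ assumed in Lemma~\ref{lem:snrc}, which both you and the paper carry over without restating.
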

\begin{proof}
Taking into account Eqs.~(\ref{eq:stvectort}) and (\ref{eq:stvectorr}) together with the relationships of Eq.~(\ref{eq:omegaphil}), the elements $h_{m,n} = [\mathbf{H}]_{m,n}$ can be written as
\begin{equation}
    h_{m,n} = \sum_{l=1}^{L}\alpha_l e^{j\left(\omega_{\psi_l}m + \omega_{\phi_l}n \right)},
\end{equation}
indicating that the channel consists as well of a sum of $L$ complex sinusoids. By properly scaling the elements $\bar{d}_{m,n}$ with the factor $\sqrt{n_t n_r / \rho}$, we obtain
\begin{equation}
\begin{split}
    \sqrt{\frac{n_t n_r}{\rho}}\bar{d}_{m,n}  =&= \sum_{l=1}^{L}\alpha_l e^{j\left(\omega_{\psi_l}i + \omega_{\phi_l}k \right)} + \sqrt{\frac{n_t n_r}{\rho}}z_{m,n}
     \\&= h_{m,n} + \check{z}_{m,n},
    \label{eq:Hest1}
    \end{split}
\end{equation}
where the new noise variance of $\check{z}_{m,n}$ is
\begin{equation}
    \sigma^2_{\check{z}} = \frac{n_t n_r}{\rho}\sigma^2_z = \frac{n_t n_r}{\rho}\frac{\sigma^2_n}{QP} = \frac{1}{\mathrm{SNR}_{\mathbf{C}}}.
\end{equation}
\end{proof}

\begin{corollary}\label{col:ssesc}
An estimator of the channel matrix $\mathbf{H}$ can be obtained from $\mathbf{\bar{D}}_{\mathbf{C}}$, with mean \ac{SSE} equal to $n_t n_r / \mathrm{SNR}_{\mathbf{C}}$.    
\end{corollary}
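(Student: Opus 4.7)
The plan is to construct the estimator explicitly from Lemma~\ref{lem:channel} and then compute its mean sum of squared errors by summing the per-entry noise variances. Since Lemma~\ref{lem:channel} has already done the hard work of identifying $\sqrt{n_t n_r/\rho}\,\bar{d}_{m,n}$ as a noisy copy of $h_{m,n}$ with additive noise $\check{z}_{m,n}$ of variance $1/\mathrm{SNR}_{\mathbf{C}}$, the corollary should follow by a short second-moment calculation.

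First I would define the proposed channel estimator as
\begin{equation*}
    \hat{\mathbf{H}} \triangleq \sqrt{\frac{n_t n_r}{\rho}}\,\mathbf{\bar{D}}_{\mathbf{C}},
\end{equation*}
so that, entry-wise, $[\hat{\mathbf{H}}]_{m,n} = h_{m,n} + \check{z}_{m,n}$ by Eq.~\eqref{eq:Hest1}. The estimation error matrix is therefore exactly the noise matrix with entries $\check{z}_{m,n}$, each of which is zero-mean with variance $1/\mathrm{SNR}_{\mathbf{C}}$.

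Next I would evaluate the mean SSE as
\begin{equation*}
    \mathbb{E}\left\{\left\|\hat{\mathbf{H}} - \mathbf{H}\right\|_F^2\right\}
    = \sum_{m=0}^{n_r-1}\sum_{n=0}^{n_t-1}\mathbb{E}\left\{|\check{z}_{m,n}|^2\right\}
    = n_t n_r \cdot \frac{1}{\mathrm{SNR}_{\mathbf{C}}},
\end{equation*}
which yields the claimed expression. The only subtlety worth flagging is that the entries $\check{z}_{m,n}$ of $\sqrt{n_t n_r/\rho}\,\mathbf{\bar{Z}}$ inherit the i.i.d.\ statistics of $\mathbf{Z}$ established just after Eq.~\eqref{eq:Sspatial}, so no cross-terms arise and the Frobenius norm expectation decomposes as a plain sum of variances.

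There is no real obstacle here; the statement is essentially a bookkeeping consequence of Lemma~\ref{lem:channel}. If I wanted to tighten the argument, I would briefly remark that the estimator is unbiased (so the mean SSE coincides with the trace of the error covariance) and that the scaling factor $\sqrt{n_t n_r/\rho}$ is precisely the one needed to simultaneously (i) recover the correct channel amplitude $\alpha_l$ from $A_l = \alpha_l/\sqrt{n_t n_r}$, and (ii) absorb the transmit power $\sqrt{\rho}$ present in Eq.~\eqref{eq:barDc}. This same normalization will reappear when TSDCE refines the estimate, so stating it cleanly here pays off later.
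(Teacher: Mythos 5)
Your proposal is correct and follows essentially the same route as the paper: define $\hat{\mathbf{H}} = \sqrt{n_t n_r/\rho}\,\mathbf{\bar{D}}_{\mathbf{C}}$, invoke Eq.~\eqref{eq:Hest1} to identify the error with $\check{z}_{m,n}$, and sum the $n_t n_r$ per-entry variances $\sigma^2_{\check{z}} = 1/\mathrm{SNR}_{\mathbf{C}}$. Your use of $|\check{z}_{m,n}|^2$ is in fact slightly cleaner than the paper's $(\hat{h}_{m,n}-h_{m,n})^2$ for complex-valued entries, but the argument is the same.
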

\begin{proof}
From Lemma \ref{lem:channel}, a straightforward channel estimate is given by
\begin{equation}
    \mathbf{\hat{H}}_{\mathbf{D}} =  \sqrt{\frac{n_t n_r}{\rho}}\mathbf{\bar{D}}_{\mathbf{C}},
    \label{eq:Hdest}
\end{equation}
which, according to Eq.~(\ref{eq:Hest1}), has elements $\hat{h}_{m,n} = h_{m,n} + \check{z}_{m,n}$. Then, the mean \ac{SSE} for such estimator is easily derived as
\begin{equation}
\begin{split}
  & \mathbb{E} \left\{  \left\|\mathbf{\hat{H}}_{\mathbf{D}}-\mathbf{H}\right \|_F^2 \right\} =  \mathbb{E} \left\{ \sum_{m=0}^{n_r-1}\sum_{n=0}^{n_t-1} \left(\hat{h}_{m,n}-h_{m,n}\right)^2 \right\} =\\ &= 
   \sum_{m=0}^{n_r-1}\sum_{n=0}^{n_t-1}\mathbb{E} \left\{\check{z}_{m,n}^2 \right\}
    = n_t n_r  \sigma^2_{\check{z}} = \frac{n_t n_r}{\mathrm{SNR}_\mathbf{C}}.
    \end{split}
  \end{equation}
\end{proof}

\begin{theorem}
The channel estimator $\mathbf{\hat{H}}_{\mathbf{D}}$ provides a performance equivalent to the LS estimator $\mathbf{\hat{H}}_{\mathrm{LS}}$.
\end{theorem}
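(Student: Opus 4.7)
The plan is to show that both estimators have the same mean squared Frobenius error by computing the MSE of $\hat{\mathbf{H}}_{\mathrm{LS}}$ in closed form and matching it to the value already derived in Corollary~\ref{col:ssesc}, namely $n_t n_r / \mathrm{SNR}_{\mathbf{C}} = (n_t n_r)^2 \sigma_n^2 / (\rho\, QP)$. Since $\hat{\mathbf{H}}_{\mathrm{LS}}$ is unbiased (it is an exact LS inverse applied to a linear-Gaussian observation), its MSE reduces to the trace of the error covariance, so the target equality becomes an algebraic identity involving $(\mathbf{Q}^H\mathbf{Q})^{-1}$.

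First, I would vectorize the estimation error to write $\mathrm{vec}(\hat{\mathbf{H}}_{\mathrm{LS}} - \mathbf{H}) = \rho^{-1/2} (\mathbf{Q}^H\mathbf{Q})^{-1}\mathbf{Q}^H \mathrm{vec}(\mathbf{N})$. Using $\mathbb{E}\{\mathrm{vec}(\mathbf{N})\mathrm{vec}(\mathbf{N})^H\} = \sigma_n^2 \mathbf{I}_{QP}$, a standard calculation gives
\begin{equation}
    \mathbb{E}\left\{\left\|\hat{\mathbf{H}}_{\mathrm{LS}}-\mathbf{H}\right\|_F^2\right\} = \frac{\sigma_n^2}{\rho}\,\mathrm{tr}\!\left((\mathbf{Q}^H\mathbf{Q})^{-1}\right).
\end{equation}
Second, I would exploit the Kronecker form $\mathbf{Q}=\mathbf{F}^T\otimes\mathbf{W}^H$ together with the identities $(\mathbf{A}\otimes\mathbf{B})^H=\mathbf{A}^H\otimes\mathbf{B}^H$ and $(\mathbf{A}\otimes\mathbf{B})(\mathbf{C}\otimes\mathbf{D})=(\mathbf{AC})\otimes(\mathbf{BD})$ to obtain $\mathbf{Q}^H\mathbf{Q} = (\mathbf{F}^*\mathbf{F}^T)\otimes(\mathbf{W}\mathbf{W}^H)$.

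Third, I would leverage the DFT-based codebook from Section~\ref{Sec:Spatial}. Substituting $\cos\bar{\phi}_p=2p/P$ and $\cos\bar{\psi}_q=-2q/Q$ into Eqs.~\eqref{eq:stvectort}--\eqref{eq:stvectorr} turns the columns of $\mathbf{F}$ and $\mathbf{W}$ into truncated DFT atoms of length $n_t$ and $n_r$ out of $P$ and $Q$ bins. The entries of $\mathbf{W}\mathbf{W}^H$ and $\mathbf{F}^*\mathbf{F}^T$ then reduce to geometric sums $\frac{1}{n_r}\sum_{q=0}^{Q-1}e^{j2\pi(m-m')q/Q}$ and $\frac{1}{n_t}\sum_{p=0}^{P-1}e^{j2\pi(n-n')p/P}$, which, under the mild sizing assumption $P\ge n_t$ and $Q\ge n_r$ guaranteeing the indices $m-m'$ and $n-n'$ lie in $(-Q,Q)$ and $(-P,P)$ respectively, evaluate to $\frac{Q}{n_r}\mathbf{I}_{n_r}$ and $\frac{P}{n_t}\mathbf{I}_{n_t}$. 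Consequently $\mathbf{Q}^H\mathbf{Q} = \frac{PQ}{n_t n_r}\mathbf{I}_{n_t n_r}$, whose inverse has trace $(n_t n_r)^2/(PQ)$.

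Plugging this trace back yields $\mathbb{E}\{\|\hat{\mathbf{H}}_{\mathrm{LS}}-\mathbf{H}\|_F^2\} = (n_t n_r)^2\sigma_n^2/(\rho\, PQ)$, which coincides with $\mathbb{E}\{\|\hat{\mathbf{H}}_{\mathbf{D}}-\mathbf{H}\|_F^2\}$ from Corollary~\ref{col:ssesc}, proving equivalence in the mean SSE sense. The main obstacle is the third step: one must recognize that the codebook ordering and cosine wrapping introduced in Section~\ref{Sec:Spatial} are precisely what make the Dirichlet-kernel sums collapse to scaled identities, so the full rank condition $PQ\ge n_t n_r$ already mentioned after the LS definition is in fact tight with equality only when the bases align, and the simple scalar form of $(\mathbf{Q}^H\mathbf{Q})^{-1}$ relies on this DFT-codebook structure rather than on a general pseudoinverse argument.
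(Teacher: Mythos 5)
Your proposal is correct and follows essentially the same route as the paper: compute the LS mean SSE as $\frac{\sigma_n^2}{\rho}\mathrm{tr}\bigl((\mathbf{Q}^H\mathbf{Q})^{-1}\bigr)$, show $\mathbf{Q}^H\mathbf{Q}=\frac{QP}{n_t n_r}\mathbf{I}$, and match the result to Corollary~\ref{col:ssesc}. The only difference is that you justify the scaled-identity step explicitly via the Kronecker factorization and the Dirichlet-kernel sums of the DFT codebook, where the paper simply invokes ``column orthogonality'' of $\mathbf{F}$ and $\mathbf{W}$ (more precisely row orthogonality, i.e.\ $\mathbf{F}\mathbf{F}^H\propto\mathbf{I}_{n_t}$ when $P>n_t$), so your version is if anything the more careful one.
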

\begin{proof}
The mean \ac{SSE} of the \ac{LS} estimator is given by
\begin{equation}
     \mathbb{E} \left\{ \left\|\mathbf{\hat{H}}_{\mathrm{LS}}-\mathbf{H}\right \|_F^2 \right\}  = \frac{1}{\rho}  \mathrm{Tr}(\mathbf{Q}^{H}\mathbf{Q})^{-1}\sigma^2_n 
      = \frac{1}{\rho}\sum_{i=1}^{n_t n_r}\frac{\sigma^2_n}{\lambda_i(\mathbf{Q}^{H}\mathbf{Q})},
\end{equation}
where $\lambda_i(\mathbf{Q}^{H}\mathbf{Q})$ are the eigenvalues of $\mathbf{Q}^{H}\mathbf{Q}$. Given the column orthogonality of both $\mathbf{F}$ and $\mathbf{W}$, it follows that $\mathbf{Q}^{H}\mathbf{Q}$ has also orthogonal columns, so that $\mathbf{Q}^{H}\mathbf{Q} = \frac{QP}{n_t n_r} \mathbf{I}$ is a diagonal matrix with eigenvalues $\lambda_i = \frac{QP}{n_t n_r}$. Therefore,
\begin{equation}
     \mathbb{E} \left\{ \left\|\mathbf{\hat{H}}_{\mathrm{LS}}-\mathbf{H}\right \|_F^2 \right\}  = \frac{\sigma^2_n}{\rho} \sum_{i=1}^{n_t n_r} \frac{n_t n_r}{QP} = \frac{n_t n_r}{\mathrm{SNR}_{\mathbf{C}}},
\end{equation}
which is the same result as the one of Corollary~\ref{col:ssesc}.

\end{proof}

\subsection{Multi-path Analysis}
Let us analyze further the structure of the path components within the cropped spatial domain observation $\mathbf{\bar{D}}_{\mathbf{C}}$ (Eq.~\eqref{eq:barDc}). It can be easily shown that each path component in the spatial domain can be expressed as an outer product of vectors similar to those in Eqs.~\eqref{eq:cvectors1} and \eqref{eq:cvectors2}, given by
\begin{eqnarray}
    \mathbf{\bar{C}}^{(l)} &=& \alpha_l \mathbf{\bar{c}}(\omega_{\psi_l})\mathbf{\bar{c}}(\omega_{\phi_l})^{H}, \label{eq:cvectors0}\\
\mathbf{\bar{c}}(\omega_{\psi_l}) &\triangleq& \frac{1}{\sqrt{n_r}} [1,\, e^{j\omega_{\psi_l}},\dots,e^{j(n_r-1)\omega_{\psi_l}}]^{T}, \label{eq:cbvectors1t}\\
\mathbf{\bar{c}}(\omega_{\phi_l}) &\triangleq& \frac{1}{\sqrt{n_t}} [1,\, e^{j\omega_{\phi_l}},\dots,e^{j(n_t-1)\omega_{\phi_l}}]^{T}.
\label{eq:cbvectors2t}
\end{eqnarray}

The above expressions indicate that the noiseless observation can be interpreted as a sum of rank-one components. In general, $\mathrm{rank}(\mathbf{\bar{C}})=L$, since only the case when the set of frequencies in $\bm{\omega}$ contains repeated elements leads to $\mathrm{rank}(\mathbf{\bar{C}}) < L$. To gain further insight, let us apply the \ac{SVD} to matrix $\mathbf{\bar{C}}$
\begin{equation}
    \mathbf{\bar{C}} = \mathbf{U}\mathbf{S}\mathbf{V}^{H},
\end{equation}
where $\mathbf{S} = \mathrm{diag}(s_1,\, s_2,\, \dots,\,s_{n_t})$ is a diagonal matrix containing the magnitude-ordered singular values of $\mathbf{\bar{C}}$, while $\mathbf{U} = [\mathbf{u}_1,\, \mathbf{u}_2,\dots,\, \mathbf{u}_{n_t}] \in \mathbb{C}^{n_r\times n_t}$ and $\mathbf{V}= [\mathbf{v}_1,\, \mathbf{v}_2, \dots,\, \mathbf{v}_{n_t}] \in \mathbb{C}^{n_t\times n_t}$ are orthonormal matrices whose columns are the corresponding left and right singular vectors, respectively, with elements $\mathbf{u}_i= [u_{i_1},\,u_{i_2},\ldots,u_{i_{n_r}}]^{T}$ and $\mathbf{v}_i=[v_{i_1},\,v_{i_2},\ldots,v_{i_{n_r}}]^{T}$.

By using this decomposition, and assuming that $\mathrm{rank}(\mathbf{\bar{C}})=L$, $\mathbf{\bar{C}}$ can be expressed as a sum of rank-one matrices as 
\begin{equation}
    \mathbf{\bar{C}} = \sum_{l=1}^{L} s_l \mathbf{u}_l \mathbf{v}_l^{H}.
\end{equation}

Without loss of generality, we can consider the $L$ path components in the system model to be ordered according to a decreasing power criterion. By comparing each of the rank-one matrices with Eq.~(\ref{eq:cvectors0}) and taking into account that both $\mathbf{\bar{c}}(\omega_{\psi_l})$ and $\mathbf{\bar{c}}(\omega_{\phi_l})$ are unit vectors, the following approximations can be established
\begin{eqnarray}
s_l &\simeq& |\alpha_l|, \label{eq:svd1}\\
\mathbf{u}_l &\simeq& e^{j\angle u_{l_1}}\mathbf{\bar{c}}(\omega_{\psi_l}), \label{eq:svd2}\\
\mathbf{v}_l &\simeq& e^{j\angle v_{l_1}} \mathbf{\bar{c}}(\omega_{\phi_l}),\label{eq:svd3}\\
\angle\left(u_{l_1}\cdot v_{l_1} \right) &\simeq& \angle(\alpha_l).\label{eq:svd4}
\end{eqnarray} 

A relevant fact is that equality in Eqs.~(\ref{eq:svd1}-\ref{eq:svd4}) would only hold in the case where the set of vectors $\mathbf{\bar{c}}(\omega_{\psi_l})$ and  $\mathbf{\bar{c}}(\omega_{\phi_l})$ were also orthogonal. That would be the case when the involved angular frequencies match, respectively, those of a \ac{DFT} of size $n_r$ and $n_t$, i.e.
\begin{eqnarray}
    \forall l \: \exists m \in 0,\dots,n_r-1 \quad &\mathrm{s.t.}& \quad e^{j\omega_{\psi_l}} = e^{j\frac{2\pi}{n_r}m},\\
     \forall l \: \exists n \in 0,\dots,n_t-1 \quad &\mathrm{s.t.}& \quad e^{j\omega_{\phi_l}} = e^{j\frac{2\pi}{n_t}n}.
\end{eqnarray}

In such case, the path components can be exactly recovered from the \ac{SVD}. Therefore, this analysis also shows that the use of a larger number of antennas would increase the chance of having orthogonal paths, as it allows for having a larger number of \ac{DFT} vectors. Obviously, having a small number of paths favors their separability. Despite that, in general, the paths are not completely orthogonal, the mmWave channel has by nature a small number of clusters with unbalanced powers \cite{Akd14}. Therefore, although the rank-one approximation of $\mathbf{\bar{C}}$ relying on the largest singular value may incorporate residuals from the rest of paths, the effect of these is expected to be low, as the simulation results will confirm. This motivates the use of a \ac{SIC} approach in this paper to retrieve the parameters corresponding to multiple paths.

\section{Transformed Spatial Domain Channel Estimation}\label{Sec:ChannelEst}

This section starts with a detailed description of the estimation of the channel parameters corresponding to the dominant path, whose steps constitute the core of each iteration of the multipath TSDCE algorithm (schematically summarized in Fig.~\ref{fig:diagram}). The first step is based on the \ac{SVD} analysis of the observation matrix in the transformed spatial domain. The second step is based on the denoising properties of the sample \ac{ACF}. Sect.~\ref{sec:sic} will address how the subsequent processing stages can be successively applied to retrieve the parameters from the rest of paths.
\begin{figure*}[t!]
\begin{center}
\includegraphics[width =0.8\textwidth]{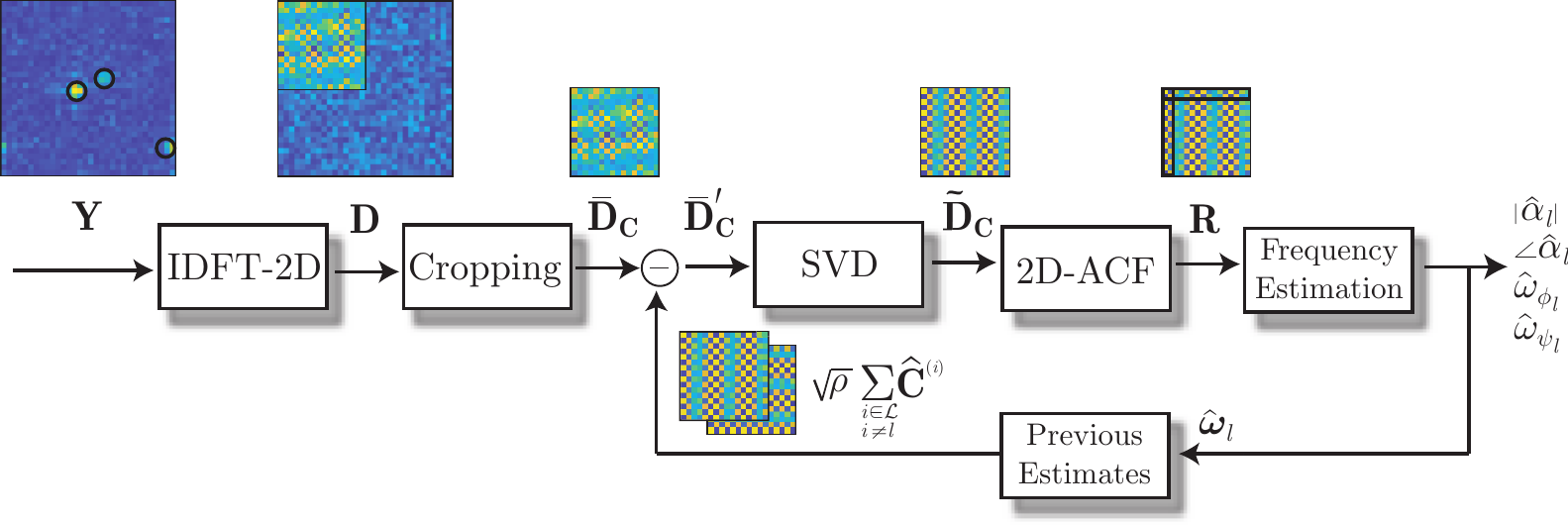}
\caption{{Block diagram of the proposed TSDCE algorithm.}}
\label{fig:diagram} 
\end{center}
\end{figure*}
\subsection{Path Extraction via Rank-one Approximations}
\label{sec:svd}

As previously discussed, since the informative part of the signal concentrates only on the top-left sub-matrix of $\mathbf{D}$, the first step is to crop such matrix to extract $\mathbf{\bar{D}}_{\mathbf{C}}$. Let us continue by obtaining an estimate of the most powerful path component $\mathbf{\bar{C}}^{(1)}$ via the rank-one approximation of $\mathbf{\bar{D}}_{\mathbf{C}}$ according to the dominant singular value
\begin{equation}
      \mathbf{\widetilde{D}}_\mathbf{C} = \check{s}_1 \check{\mathbf{u}}_1 \check{\mathbf{v}}_1^{H} = \sqrt{\rho}\mathbf{\bar{C}}^{(1)} + \mathbf{E} =  \sqrt{\rho}\hat{\mathbf{C}}^{(1)},
      \label{eq:Sform}
\end{equation}
where $\check{s}_1$, $\check{\mathbf{u}}_1$ and $\check{\mathbf{v}}_1$ are the first singular value and vectors extracted from the \ac{SVD} of $\mathbf{\bar{D}}_{\mathbf{C}}$ and $\mathbf{E}~=~ \sqrt{\rho} \left( \hat{\mathbf{C}}^{(1)}  - \mathbf{\bar{C}}^{(1)} \right)$ contains the residual error, which is later analyzed in Sect.~\ref{sec:bounds}.

\subsection{Sample Autocorrelation Function}
Let us consider the use of the unbiased 2D sample \ac{ACF} of the rank-one observation $\mathbf{\widetilde{D}}_{\mathbf{C}}$
\begin{equation}
\mathbf{R} = \mathrm{ACF}_{\mathrm{2D}}(\mathbf{\widetilde{D}}_{\mathbf{C}}) \in \mathbb{C}^{n_r \times n_t}
\end{equation}
with elements $r_{m,n}= \left[\mathbf{R}\right]_{m,n}$ given by
\begin{eqnarray}
    r_{m,n} &=& 
   \frac{1}{\kappa_{m,n}} \sum_{\mu=0}^{n_r-m-1}\sum_{\nu=0}^{n_t-n-1}\tilde{d}^{*}_{\mu,\nu}\tilde{d}_{\mu+m,\nu+n},
    \label{eq:autocorrelation0} \\
    \kappa_{m,n} &=& (n_r-m)(n_t-n),
\end{eqnarray}
where $\tilde{d}_{m,n} =\left[\mathbf{\widetilde{D}}_{\mathbf{C}}\right]_{m,n}$. Note that the 2D-\ac{ACF} can be efficiently computed using a zero-padded \ac{FFT} \cite{Smi07} and that, due to its symmetry, only positive lags are considered.
Using Eq.~(\ref{eq:Sform}) we can express $\mathbf{R}$ as a sum of auto and cross-correlation terms
\begin{equation}
\begin{split}
\mathbf{R} & =   \rho \mathrm{ACF}_{\mathrm{2D}}\left( \mathbf{\bar{C}}^{(1)}  \right) +   \mathrm{ACF}_{\mathrm{2D}}\left(\mathbf{E} \right) \\
& + \sqrt{\rho}\mathrm{CCF}_{\mathrm{2D}}\left( \mathbf{\bar{C}}^{(1)}, \mathbf{E} \right) + \sqrt{\rho}\mathrm{CCF}_{\mathrm{2D}}\left( \mathbf{E}, \mathbf{\bar{C}}^{(1)} \right),
\end{split}\nonumber
\end{equation}
where $\mathrm{CCF}_{\mathrm{2D}} \left(\mathbf{A},\mathbf{B}\right)$ stands for the sample 2D cross-correlation sequence between two matrices. 

By substituting $\tilde{d}_{\mu,\nu}$ in Eq.~\eqref{eq:autocorrelation0} by the sinusoidal component of Eq.~\eqref{eq:dbarnoise}, the corresponding \ac{ACF} results in the same sinusoid multiplied by the conjugate of its amplitude and zero initial phase. Thus, the above equation can be compacted into 

\begin{equation}
    \mathbf{R} = \rho A_1 ^{*}\mathbf{\bar{C}}^{(1)} +  \bm{\Gamma},
\end{equation}
where $ \bm{\Gamma}$ now condenses all the terms involving the residual $\mathbf{E}$. In general, the elements of $\mathbf{R}$ can be expressed as
\begin{equation}
    r_{m.n} = \rho |A_1|^2 e^{j(\omega_{\psi_1}m + \omega_{\phi_1}n)} + \gamma_{m,n},
    \label{eq:autocorrelem}
\end{equation}
$\gamma_{m,n} = [\bm{\Gamma}]_{m,n}$ is the residual error component at lag $(m,n)$.

The use of the \ac{ACF} introduces three main advantages. First, the effect of the phase of $\alpha_1$ is removed from the observation, facilitating the estimation of the underlying spatial frequencies. Second, the ratio between the power of the target sinusoidal signal and the residual is considerably reduced due to its averaging property, concentrating part of the noise at the $(0,0)$ lag where the phase is known to be zero. Finally, while the power of the residual $\mathbf{E}$ is evenly distributed across $\mathbf{\widetilde{D}}_\mathbf{C}$, the new error $\bm{\Gamma}$ is lower at small correlation lags. The power of the \ac{ACF} error increases uniformly with the lag due to the smaller number of samples taking place in the computation of the autocorrelation estimate. In fact, it holds for the unbiased estimator that $\mathrm{var}(r_{m,n}) \propto\left(1/\kappa_{m,n}\right)$ \cite{JenkinsBook68}. This will motivate the use of a \ac{WLS} approach for frequency estimation.

In any case, it can be shown that all the required information to estimate $\omega_{\psi_1}$ and $\omega_{\phi_1}$ is contained in the phase angle of $r_{m,n}$ \cite{Tre85}. In fact, since the phase of $\mathbf{R}$ is zero at $m=n=0$, the following estimates for the vectors in Eqs.~\eqref{eq:cbvectors1t}-\eqref{eq:cbvectors2t} can be obtained
\begin{eqnarray}
    \hat{\mathbf{c}}(\omega_{\psi_1}) &=& \frac{1}{\sqrt{n_r}} \left[e^{\angle r_{0,0}},\, e^{j\angle r_{1,0}},\dots, e^{j\angle r_{n_r-1,0}}\right]^{T}, \label{eq:cvectorest1}\\
    \hat{\mathbf{c}}(\omega_{\phi_1}) &=& \frac{1}{\sqrt{n_t}}\left[e^{\angle r_{0,0}},\, e^{j\angle r_{0,1}},\dots, e^{j\angle r_{0,n_t-1}}\right]^{T},
    \label{eq:cvectorest2}
\end{eqnarray}
evidencing that all the required frequency information corresponding to the extracted path component is embedded within the phase of the first row and first column of $\mathbf{R}$.

\subsection{Spatial Frequency Estimation}
By looking at the form of $\mathbf{\bar{c}}(\omega_{\psi_1})$ and $\mathbf{\bar{c}}(\omega_{\phi_1})$ in Eqs.~(\ref{eq:cbvectors1t}) and~(\ref{eq:cbvectors2t}), it is evident that the unwrapped phase on the vertical and horizontal directions is given, respectively, by
\begin{eqnarray}
 \mathcal{P}_{\psi_1}(m) &=& \omega_{\psi_1}m, \quad m = 0,\dots,n_r - 1, \\
 \mathcal{P}_{\phi_1}(n) &=& \omega_{\phi_1}n, \quad\,\, n = 0,\dots,n_t - 1.
\end{eqnarray}
Thus, the phases of the elements making up $\mathbf{\hat{c}}(\psi_1)$ and $\mathbf{\hat{c}}(\phi_1)$ correspond to wrapped images of $\mathcal{P}_{\psi_1}$ and $\mathcal{P}_{\phi_1}$, respectively. As a result, the frequency estimation problem boils down to estimating the slopes of the unwrapped phase sequences on both the vertical and horizontal directions. To this end, a simple phase unwrapping procedure followed by \ac{WLS} slope estimation is proposed. The frequency estimation steps are as follows.

\subsubsection{Phase unwrapping}
Initially, first-order phase differences are stored
\begin{eqnarray}
    \bm{\Delta}^{(\psi_1)} &\triangleq& [\Delta^{(\psi_1)}_0,\, \Delta^{(\psi_1)}_1, \dots, \Delta^{(\psi_1)}_{n_r-1}]^{T},\\
    \bm{\Delta}^{(\phi_1)} &\triangleq& [\Delta^{(\phi_1)}_0,\, \Delta^{(\phi_1)}_1, \dots, \Delta^{(\phi_1)}_{n_t-1}]^{T},
\end{eqnarray}
where $\Delta^{(\psi_1)}_0 = \Delta^{(\phi_1)}_0 = 0$, $\Delta^{(\psi_1)}_m~=~ \angle(r_{m,0}\cdot r^{*}_{m-1,0})$ and $\Delta^{(\phi_1)}_n~=~ \angle(r_{0,n}\cdot r^{*}_{0,n-1})$.

Note that, ideally, the phase differences between consecutive elements correspond to the angular frequencies $\omega_{\psi_1}$ and $\omega_{\phi_1}$ and, therefore, the phase steps are known to have a magnitude smaller than or equal to $\pi$, i.e. $|\Delta_{m}|\leq \pi$. In fact, due to noise effects, phase differences having a magnitude close to 0 or $\pi$ may result in unwanted wrappings. To mitigate this effect, the observed phase differences are wrapped to the range $[-\pi,\,\pi]$ or to $[0,\,2\pi]$, whichever provides the smaller variance. By taking the cumulative sum of $\bm{\Delta}^{(\psi_1)}$ and $\bm{\Delta}^{(\phi_1)}$, the unwrapped phases are estimated as
\begin{eqnarray}
    \hat{\mathcal{P}}_{\psi_1} (m) &=& \sum_{i=0}^{m}\Delta^{(\psi_1)}_i, \quad m = 0,\,1,\dots,n_r-1, \\
        \hat{\mathcal{P}}_{\phi_1} (n) &=& \sum_{i=0}^{n}\Delta^{(\phi_1)}_i, \quad n = 0,\,1,\dots,n_t-1.
\end{eqnarray}

\subsubsection{\ac{WLS}}
We formulate the \ac{WLS} optimization problem
\begin{eqnarray} 
    \hat{\omega}_{\psi_1} &=& \arg\min_{\omega_{\psi_1}} \left\{ \sum_{m=0}^{n_r-1}w_m \left(\hat{\mathcal{P}}_{\psi_1} (m) - \omega_{\psi_1}m \right)^{2} \right\},\\
    \hat{\omega}_{\phi_1} &=& \arg\min_{\omega_{\phi_1}} \left\{ \sum_{n=0}^{n_t-1}w_n \left(\hat{\mathcal{P}}_{\phi_1} (n)- \omega_{\phi_1}n \right)^{2}\right\},
\end{eqnarray}
where $w_m$ and $w_n$ are the selected weights. The solution for both frequencies has the common form
\begin{equation}
        \hat{\omega} = \frac{\sum_{i=0}^{M-1}w_i(i -  \bar{\mathrm{x}})( \hat{\mathcal{P}}(i)-\bar{\mathrm{y}})}{\sum_{i=0}^{M-1}w_i (i -  \bar{\mathrm{x}})^2},
    \label{eq:lsfreq}
\end{equation}
with
\begin{equation}
   \bar{\mathrm{x}} = \frac{\sum_{i=0}^{M-1}w_i i}{\sum_{i=0}^{M-1}w_i}, \quad
    \bar{\mathrm{y}} = \frac{\sum_{i=0}^{M-1}w_i \hat{\mathcal{P}}(i)}{\sum_{i=0}^{M-1}w_i}, \label{eq:lsxy}
\end{equation}
and $M \in \{n_t,\, n_r\}$, $i \in \{m,\,n\}$ and $\hat{\mathcal{P}}(i)$ may refer to $\hat{\mathcal{P}}_{\psi_1}(m)$ or $\hat{\mathcal{P}}_{\phi_1}(n)$, correspondingly, so that $\hat{\omega}$ is an estimate of $\hat{\omega}_{\psi_1}$ or  $\hat{\omega}_{\phi_1}$. It is important to note that, 
according to Eq.~(\ref{eq:omegaphil}), the estimated frequencies must be wrapped to the range $[-\pi,\, \pi]$. Selecting the proper weights is discussed next.

\subsubsection{Weights}

Taking into account the observations from the autocorrelation values $r_{m,n}$, with $\mathrm{var}(r_{m,n}) \propto \left(1/\kappa_{m,n}\right)$ and that, under relatively high \ac{SNR}, the phase noise variance of a noisy complex sinusoid can be assumed to be proportional to the noise variance \cite{Fu08}, we can derive proper weights for the \ac{WLS} problem as follows. Assuming that the variance of the sample \ac{ACF} estimator is independent for different lags, the variance of first-order differences is proportional to the product of the variances corresponding to neighboring autocorrelation values in $\Delta_{m}^{(\psi_1)}$ and  $\Delta_{n}^{(\phi_1)}$
\begin{equation}
    \mathrm{var}(\Delta_{i}) \propto \left( \frac{1}{(M-i)(M-i+1)} \right).
\end{equation}
As phase differences are accumulated during the unwrapping operation, the variances of $\hat{\mathcal{P}}_{\psi_1}$ and $\hat{\mathcal{P}}_{\phi_1}$ should approximate
\begin{equation}
\begin{split}
    \mathrm{var}\left(\hat{\mathcal{P}}(i)\right) &\propto \left( \sum_{k=0}^{i} \frac{1}{(M-k)(M-k+1)} \right) \\& = \left( \frac{i+1}{(M+1)(M-i)} \right).
    \end{split}
\end{equation}

Since optimum weights should be proportional to the inverse of the variances, the weights to be used are given by
\begin{equation}
    w_i = \frac{(M+1)(M-i)}{i+1}.
    \label{eq:weights}
\end{equation}

\subsection{Estimation of the Path Complex Coefficient}
\label{sec:alphaest}
First, taking into account Eq.~(\ref{eq:autocorrelem}), an estimate of $|A_1|^2$ can be obtained by calculating the weighted average autocorrelation magnitude at lags different from $m=n=0$. As in the frequency estimation step, the weights are selected according to the relative variance of the autocorrelation estimates
\begin{eqnarray}
    |\hat{A}_1|^2 &=& \frac{1}{\rho}\mathcal{K} \mathop{\sum^{n_r-1}\sum^{n_t-1}}_{m,n \neq (0,0)}\kappa_{m,n}|r_{m,n}|,
    \label{eq:Aest}
    \\
  \mathcal{K} &=& \frac{1}{ \frac{1}{4}n_r(n_r+1)n_t(n_t+1)-n_t n_r},
\end{eqnarray}
where $\mathcal{K} = \mathop{\sum \sum}_{m,n \neq (0,0)}\kappa_{m,n}$ is the weight normalization constant. Then, according to Eq.~(\ref{eq:autocorrelem}), the magnitude of the path coefficient can be obtained as
\begin{equation}
    |\hat{\alpha}_1| = \sqrt{n_t n_r}\sqrt{|\hat{A}_1|^2}.
\end{equation}

Finally, the maximum likelihood estimate of the phase of $\alpha_1$ is obtained by taking the mean of the underlying circular normal distribution
\begin{equation}
    \angle \hat{\alpha}_1 = \angle \left( \frac{1}{n_r n_t}\sum_{m=0}^{n_r-1}\sum_{n=0}^{n_t-1}\tilde{d}_{m,n}e^{-j(\hat{\omega}_{\psi_1}m + \hat{\omega}_{\phi_1}n)} \right).
    \label{eq:angleest}
\end{equation}

\subsection{Successive Path Estimation}
\label{sec:sic}
Previous sections presented the different processing stages aimed at estimating the parameters corresponding to the dominant path within the original observation. The estimation of the rest of paths is achieved by following the same processing steps but from a modified observation, where the previous estimated components have been suppressed following a \ac{SIC} approach. Let us assume that one or several paths have already been estimated, leading to estimated parameters $\bm{\hat{\omega}}_i$, $i\in\mathcal{L}$, where $\mathcal{L}$ is the set of previously estimated paths. By using such parameters, each estimated component $\hat{\mathbf{C}}^{(i)}$ can be reconstructed as
\begin{equation}
    [\hat{\mathbf{C}}^{(i)}]_{m,n} = \frac{|\hat{\alpha}_i|}{\sqrt{n_t n_r}}e^{j\angle\hat{\alpha}_i} e^{j(\hat{\omega}_{\psi_i}m + \hat{\omega}_{\phi_i}n)}.
\end{equation}

\begin{algorithm}[t]
\SetAlgoLined
\SetKwData{Left}{left}\SetKwData{This}{this}\SetKwData{Up}{up}
\SetKwFunction{Union}{Union}\SetKwFunction{FindCompress}{FindCompress}
\SetKwInOut{Input}{Input}\SetKwInOut{Output}{Output}
\SetKwInOut{Initialize}{Initialize}

\Input{$\mathbf{Y}$, $L_d$, $K$, $n_t$, $n_r$, $\rho$}
\Output{$\bm{\hat{\omega}}$ }
\Initialize{$\mathcal{L}=\emptyset$; }
$\mathbf{D} = \mathrm{IDFT}_{\mathrm{2D}}\left\lbrace \mathbf{Y} \right\rbrace$\;
$\mathbf{\bar{D}}_{\mathbf{C}} = \mathbf{D}_{0:n_r-1,0:n_t-1}$\; 

\For{$k=1$ to $K$}{
\For{$l=1$ to $L_d$}{
    $  \mathbf{\bar{D}}^{'}_{\mathbf{C}} \leftarrow \mathbf{\bar{D}}_{\mathbf{C}} - \sqrt{\rho}\sum_{\substack{i\in\mathcal{L} \\ i\neq l}} \hat{\mathbf{C}}^{(i)}$
   \eIf{$k=1$ $\&$ $l<L_d$}{
    $[\mathbf{U},\mathbf{S},\mathbf{V}] = \mathrm{SVD}(\mathbf{\bar{D}}^{'}_{\mathbf{C}})$\;
  $\mathbf{\widetilde{D}}_{\mathbf{C}} = \check{s}_1 \check{\mathbf{u}}_1
  \check{\mathbf{v}}_1^{H}$\;}
  {
  $\mathbf{\widetilde{D}}_{\mathbf{C}} =\mathbf{\bar{D}}^{'}_{\mathbf{C}}$\;
  }
   $\mathbf{R}  = \mathrm{ACF}_{\mathrm{2D}}(\mathbf{\widetilde{D}}_{\mathbf{C}})$\;
  $\Delta^{(\psi_l)}_m~=~ \angle(r_{m,0}\cdot r^{*}_{m-1,0}), \quad m=1,\dots,n_r-1$\;
 $\Delta^{(\phi_l)}_n~=~ \angle(r_{0,n}\cdot r^{*}_{0,n-1}), \quad n=1,\dots,n_t-1$\;
   \If{$\mathrm{var}(\bm{\Delta}^{(\psi_l)}) > \mathrm{var}(\mathcal{W}_{[0,2\pi]}(\bm{\Delta}^{(\psi_l)}))$}{
    $\bm{\Delta}^{(\psi_l)} \leftarrow \mathcal{W}_{[0,2\pi]}(\bm{\Delta}^{(\psi_l)})$
    }
   \If{$\mathrm{var}(\bm{\Delta}^{(\phi_l)}) > \mathrm{var}(\mathcal{W}_{[0,2\pi]}(\bm{\Delta}^{(\phi_l)}))$}{
    $\bm{\Delta}^{(\phi_l)} \leftarrow \mathcal{W}_{[0,2\pi]}(\bm{\Delta}^{(\phi_l)})$ 
    }
  $\hat{\mathcal{P}}_{\psi_l} (m) = \sum_{i=0}^{m}\Delta^{(\psi_l)}_i, \quad m = 0,\,1,\dots,n_r-1$\;
 $\hat{\mathcal{P}}_{\phi_l} (n) = \sum_{i=0}^{n}\Delta^{(\phi_l)}_i, \quad n = 0,\,1,\dots,n_t-1$\;
  Estimate $\hat{\omega}_{\psi_l}$ and $\hat{\omega}_{\phi_l}$ with Eqs.~(\ref{eq:lsfreq}),\eqref{eq:lsxy},\eqref{eq:weights}\;
  $\hat{\omega}_{\phi_l} = \mathcal{W}_{[-\pi,\pi]}(\hat{\omega}_{\phi_l})$ ; $\hat{\omega}_{\psi_l} = \mathcal{W}_{[-\pi,\pi]}(\hat{\omega}_{\psi_l})$\;
  Estimate $\hat{\alpha}_l$ with Eqs.~(\ref{eq:Aest})-(\ref{eq:angleest})\;
 $\mathcal{L} \leftarrow \mathcal{L}\cup\{l\}$\;
     }  
 }
 \caption{TSDCE Algorithm} \label{Alg:Alg1}
\end{algorithm}

By suppressing the above reconstructed path components from the original observation, the spatial domain observation matrix to estimate path component $l$ can be updated as follows
\begin{equation}
    \mathbf{\bar{D}}^{'}_{\mathbf{C}} \leftarrow \mathbf{\bar{D}}_{\mathbf{C}} - \sqrt{\rho}\sum\limits_{\substack{i\in\mathcal{L} \\ i\neq l}} \hat{\mathbf{C}}^{(i)}, \label{eq:Scupdate}
\end{equation}
where the updated observation $\mathbf{\bar{D}}^{'}_{\mathbf{C}}$ will then become the new input replacing $\mathbf{\bar{D}}_{\mathbf{C}}$ in the processing stages described throughout Sections~\ref{sec:svd} to~\ref{sec:alphaest}. 

The steps of the proposed method are summarized in Algorithm 1. It is worth noting that, while $L$ iterations are sufficient for having initial estimates of the parameters of the $L$ desired paths, these can be further refined by running additional estimation rounds through an outer loop with $K>1$. Indeed, once the estimates of each individual path component are available after the first estimation round ($k=1$), these can be effectively used to cancel all the disturbing path contributions from the original observation, leading to better estimates than the initial ones. Such cancellation is more effective than the SVD extraction performed in the first round, where all the parameters need to be estimated from scratch without any a priori path information. In this context, the exact value of $L$ is often assumed to be known \cite{Mon15} or, alternatively, a desired number of path components to be extracted $L_d$ is set \cite{Alk14}, as reflected in Algorithm 1. Note, however, that $L_d$ is used as a stopping criterion in TSDCE without affecting its applicability in real scenarios, where a power-based criterion on the extracted components could be used.

\section{Analysis of Performance Bounds}
\label{sec:bounds}
This section derives the upper and lower performance bounds of the proposed TSDCE algorithm by using the mean SSE as a metric. In the upper bound case, separate lemmas are given for $L=1$ and $L > 1$ cases.
\subsection{Upper Bound Analysis}
\begin{lemma} \label{lem:l1ub}
For the case $L=1$, an upper bound for the mean \ac{SSE} of the proposed method is given by $\frac{(\sqrt{n_r} + \sqrt{n_t})^2}{\mathrm{SNR}_{\mathbf{C}}}$.
\end{lemma}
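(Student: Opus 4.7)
My plan is to reduce the claim to a matrix-denoising bound in the transformed spatial domain and then to control the key quantity by the operator norm of a complex Gaussian matrix.

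\textbf{Step 1.} First, I convert the channel SSE into a spatial-domain error. For $L=1$ we have $\mathbf{H} = \sqrt{n_t n_r}\,\mathbf{\bar{C}}^{(1)}$ and the TSDCE reconstruction is $\hat{\mathbf{H}} = \sqrt{n_t n_r}\,\hat{\mathbf{C}}^{(1)}$, so $\mathbb{E}\{\|\hat{\mathbf{H}} - \mathbf{H}\|_F^2\} = (n_t n_r/\rho)\,\mathbb{E}\{\|\mathbf{E}\|_F^2\}$ with $\mathbf{E} = \mathbf{\widetilde{D}}_{\mathbf{C}} - \sqrt{\rho}\,\mathbf{\bar{C}}^{(1)}$ as in Eq.~\eqref{eq:Sform}. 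Both matrices on the right are rank one by construction, so $\mathrm{rank}(\mathbf{E}) \leq 2$ and $\|\mathbf{E}\|_F^2 \leq 2\|\mathbf{E}\|_2^2$, reducing the problem to bounding a spectral-norm perturbation.

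\textbf{Step 2.} Next, I bound the spectral error by the noise operator norm. Because $\mathbf{\widetilde{D}}_{\mathbf{C}}$ is the best rank-one approximation of $\mathbf{\bar{D}}_{\mathbf{C}} = \sqrt{\rho}\,\mathbf{\bar{C}}^{(1)} + \mathbf{\bar{Z}}$, Eckart--Young gives $\|\mathbf{\widetilde{D}}_{\mathbf{C}} - \mathbf{\bar{D}}_{\mathbf{C}}\|_2 = \sigma_2(\mathbf{\bar{D}}_{\mathbf{C}})$, and Weyl's inequality yields $\sigma_2(\mathbf{\bar{D}}_{\mathbf{C}}) \leq \|\mathbf{\bar{Z}}\|_2$ since the signal is rank one. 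Combined with the triangle inequality and the rank-two Frobenius-to-spectral comparison from Step 1, this bounds $\|\mathbf{E}\|_F$ by a constant multiple of $\|\mathbf{\bar{Z}}\|_2$.

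\textbf{Step 3.} Finally, I invoke the classical non-asymptotic bound $\mathbb{E}\{\|\mathbf{\bar{Z}}\|_2^2\} \leq (\sqrt{n_r} + \sqrt{n_t})^2\,\sigma_z^2$ for an $n_r\times n_t$ matrix of i.i.d.\ $\mathcal{CN}(0,\sigma_z^2)$ entries. Substituting $\sigma_z^2 = \sigma_n^2/(QP)$ and using $\mathrm{SNR}_{\mathbf{C}} = (\rho/\sigma_n^2)(QP/(n_t n_r))$ from Lemma~\ref{lem:snrc}, the factor $n_t n_r/\rho$ from Step 1 cancels the $\sigma_z^2 n_t n_r$ produced by the Gaussian operator-norm bound, collapsing the final expression to $(\sqrt{n_r} + \sqrt{n_t})^2/\mathrm{SNR}_{\mathbf{C}}$, as claimed.

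\textbf{Main obstacle.} The sharpness of the argument lives entirely in Step 2: the naive triangle-plus-rank-comparison chain yields a multiplicative constant larger than one and therefore overshoots the target. Obtaining the exact coefficient $(\sqrt{n_r} + \sqrt{n_t})^2$ rather than a larger multiple will require a careful Davis--Kahan/Wedin analysis of the principal singular pair of $\mathbf{\bar{D}}_{\mathbf{C}}$, exploiting the orthogonality between the retained rank-one component and the discarded SVD tail to absorb the stray factors. The remaining steps are essentially assembly.
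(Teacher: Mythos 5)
Your proposal does not close: as you yourself concede in the final paragraph, the chain ``Eckart--Young $+$ Weyl $+$ triangle inequality $+$ rank-two Frobenius-to-spectral comparison'' gives $\|\mathbf{E}\|_F^2 \leq 2\left(\sigma_2(\mathbf{\bar{D}}_{\mathbf{C}}) + \|\mathbf{\bar{Z}}\|_2\right)^2 \leq 8\|\mathbf{\bar{Z}}\|_2^2$, so after Step 3 you land at a constant multiple (roughly $8$) of the claimed bound, not the bound itself. Deferring the removal of that factor to an unexecuted Davis--Kahan/Wedin argument is a genuine gap, not an assembly detail: the lemma asserts the constant $(\sqrt{n_r}+\sqrt{n_t})^2$ exactly, and your route as written provably overshoots it.

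The paper reaches the constant by a different and much more direct accounting, which you may find worth contrasting with your perturbation-theoretic route. It writes $\mathbb{E}\{\|\mathbf{\widetilde{D}}_{\mathbf{C}}\|_F^2\} = \mathbb{E}\{\lambda_1(\mathbf{\bar{D}}_{\mathbf{C}}^H\mathbf{\bar{D}}_{\mathbf{C}})\}$ (since the approximation retains only the top singular value), bounds this via Weyl's inequality by $\mathbb{E}\{\lambda_1(\rho\mathbf{\bar{C}}^H\mathbf{\bar{C}})\} + \mathbb{E}\{\lambda_1(\mathbf{\bar{Z}}^H\mathbf{\bar{Z}})\} = \rho + \sigma_z^2(\sqrt{n_r}+\sqrt{n_t})^2$ using Gordon's theorem (your Step 3 ingredient), and then attributes the entire excess over the signal energy $\rho$ to the residual $\mathbf{E}$, justifying the neglect of signal--error cross-terms by the uncorrelatedness of path components and noise under the expectation. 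That is, the paper never bounds $\|\mathbf{E}\|_2$ at all; it bounds $\mathbb{E}\{\|\mathbf{E}\|_F^2\}$ as an energy difference, which is where the clean constant comes from. If you want to salvage your approach, the missing piece is precisely an argument that $\mathbb{E}\{\|\mathbf{\widetilde{D}}_{\mathbf{C}}\|_F^2\} \approx \rho + \mathbb{E}\{\|\mathbf{E}\|_F^2\}$ (orthogonality in expectation between $\sqrt{\rho}\,\mathbf{\bar{C}}^{(1)}$ and $\mathbf{E}$); with that in hand the spectral-norm machinery of your Step 2 becomes unnecessary.
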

\begin{proof}
The proposed method departs from the cropped spatial domain observation $\mathbf{\bar{D}}_\mathbf{C}$, which has been shown in Lemma \ref{lem:channel} to provide a noisy observation of the channel matrix. An upper bound for the proposed method can be established by studying the mean \ac{SSE} corresponding to the channel estimate derived from the SVD-based rank-one approximation $\mathbf{\widetilde{D}}_\mathbf{C}$. 

Let us first analyze the variance of the elements $\tilde{d}_{m,n}$ of $\mathbf{\widetilde{D}}_\mathbf{C}$. Recalling that the approximation is based on the largest singular value, the variance is given by
\begin{equation}
    \mathrm{var}(\tilde{d}_{m,n}) = \frac{\mathbb{E}\left\{\left\|\mathbf{\widetilde{D}}_\mathbf{C} \right\|_{F}^{2}\right\} }{n_t n_r} = \frac{\mathbb{E}\left\{\lambda_1(\mathbf{\bar{D}}_\mathbf{C}^{H}\mathbf{\bar{D}}_\mathbf{C})\right\}}{n_t n_r}.
    \label{eq:vartildesc}
\end{equation}
Taking into account the expectation operator and that path components and noise are uncorrelated, the impact of signal and noise cross-terms can be neglected. Therefore, by considering Weyl's inequality for the eigenvalues of Hermitian matrices \cite{knutson2001}, it follows that
\begin{equation}
       \mathbb{E}\left\{\lambda_1(\mathbf{\bar{D}}_\mathbf{C}^{H}\mathbf{\bar{D}}_\mathbf{C}) \right\}\leq \mathbb{E}\left\{\lambda_1 (\rho\mathbf{\bar{C}}^{H}\mathbf{\bar{C}})\right\} + \mathbb{E}\left\{\lambda_1 (\mathbf{\bar{Z}}^{H}\mathbf{\bar{Z}})\right\},\label{eq:Wey}
\end{equation}
indicating an upper bound for the largest singular value (recall that the largest singular value is obtained as the square root of $\lambda_1(\mathbf{\bar{D}}_\mathbf{C}^{H}\mathbf{\bar{D}}_\mathbf{C})$) of the noisy observation as a function of the largest eigenvalue of the path component matrix product and of the noise matrix product. For matrix $\rho\mathbf{\bar{C}}^{H}\mathbf{\bar{C}}$ and assuming $L=1$, it holds
\begin{equation}
    \mathbb{E}\left\{\lambda_1 (\rho\mathbf{\bar{C}}^{H}\mathbf{\bar{C}})\right\}=\mathbb{E}\left\{\rho|\alpha_1|^2\right\}=\rho.
    \label{eq:eigC}
\end{equation}

On the other hand, the variance of the noise can be expressed as
\begin{equation}
    \sigma^2_z = \frac{1}{n_t n_r}\mathbb{E}\left\{\left\|\mathbf{\bar{Z}} \right\|_{F}^{2}\right\} = \frac{1}{n_t n_r}\sum_{i=1}^{n_r}\mathbb{E}\left\{\lambda_i(\mathbf{\bar{Z}}^{H}\mathbf{\bar{Z}})\right\}.
\end{equation}
According to Gordon's theorem for random Gaussian matrices \cite{Gordon85}, the mean of the largest eigenvalue of $\mathbf{\bar{Z}}^{H}\mathbf{\bar{Z}}$, i.e. 
$\lambda_1(\mathbf{\bar{Z}}^{H}\mathbf{\bar{Z}})$, is bounded by
\begin{equation}
    \mathbb{E}\left\{\lambda_1(\mathbf{\bar{Z}}^{H}\mathbf{\bar{Z}})\right\} \leq \sigma^2_z(\sqrt{n_r} + \sqrt{n_t})^2.
   \end{equation}

By using the above result in Eq.~(\ref{eq:Wey}) and substituting in Eq.~(\ref{eq:vartildesc})
\begin{equation}
     \mathrm{var}(\tilde{d}_{m,n}) \leq \frac{\left(\rho + \sigma^2_z(\sqrt{n_t} + \sqrt{n_r})^2\right)}{n_t n_r}.
\end{equation}

Then, the variance added by the residual noise $\mathbf{E}$ in Eq.~(\ref{eq:Sform}) must satisfy 
\begin{equation}
    \sigma^2_{e} \leq \frac{(\sqrt{n_r} + \sqrt{n_t})^2}{n_t n_r}\sigma^2_z. 
\end{equation}
Therefore, as in Eq.~(\ref{eq:Hdest}), an improved estimate of the channel due to the SVD noise filtering effect can be obtained as $\mathbf{\hat{H}}_{\mathbf{\widetilde{D}}}~=~\sqrt{\frac{n_t n_r}{\rho}}\mathbf{\widetilde{D}}_{\mathbf{C}}$, leading to a mean \ac{SSE}
\begin{equation}\label{eq:sse_s2}
\begin{split}
   &\mathbb{E} \left\{ \left\|\mathbf{\hat{H}}_{\mathbf{\widetilde{D}}}-\mathbf{H}\right \|_F^2 \right\}<  \mathbb{E} \left\{ \sum_{m=0}^{n_r-1}\sum_{n=0}^{n_t-1} \left(\hat{h}_{m,n}-h_{m,n}\right)^2 \right\}=
   \nonumber \\&= \sum_{m=0}^{n_r-1}\sum_{m=0}^{n_t-1}\mathbb{E} \left\{\check{e}_{m,n}^2 \right\}= n_t n_r  \sigma^2_{\check{e}},
   \end{split}
    \end{equation}
where $\check{e}_{m,n} = \sqrt{\frac{n_t n_r}{\rho}}[\mathbf{E}]_{m,n}$ and $\sigma^2_{\check{e}}=\mathrm{var}(\check{e}_{m,n})=\frac{n_t n_r}{\rho}\sigma^2_e$. Note that the above result can alternatively be expressed as
\begin{equation}
\begin{split}
 &\mathbb{E} \left\{ \left\|\mathbf{\hat{H}}_{\mathbf{\widetilde{D}}}-\mathbf{H}\right \|_F^2 \right\} < n_t n_r \frac{n_t n_r}{\rho}\sigma^2_e =
\\& =\frac{n_t n_r}{\rho}(\sqrt{n_r} + \sqrt{n_t})^2\sigma^2_z = \frac{(\sqrt{n_r} + \sqrt{n_t})^2}{\mathrm{SNR}_{\mathbf{C}}}.
\end{split}
\end{equation}
\end{proof}

\begin{lemma} \label{lem:lh1ub}
For $L > 1$, an upper bound for the mean \ac{SSE} is given by $\frac{n_t n_r}{\rho}\sum_{l=1}^{L}n_t \lambda_{l,\mathrm{mean}}$, where $\lambda_{l,\mathrm{mean}}$ corresponds to the mean of the ordered statistics of the normalized eigenvalue distribution of the noise, i.e.  $\lambda_{l,\mathrm{mean}} = \mathbb{E}\left\{\lambda_l\left(\frac{1}{n_t}\mathbf{\bar{Z}}^{H}\mathbf{\bar{Z}}\right) \right\}$.
\end{lemma}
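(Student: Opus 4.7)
The plan is to adapt the argument of Lemma~\ref{lem:l1ub} to the SIC-based multipath setting by tracking each of the $L$ successive rank-one SVD extractions performed by Algorithm~\ref{Alg:Alg1}. After canceling the $l-1$ previously estimated components, the residual observation on which the $l$-th rank-one SVD is performed consists of the $l$-th path signal plus the original additive noise, together with any residuals left from prior imperfect cancellations which, for an upper bound, can be absorbed into the noise term.

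First I would apply Weyl's inequality for Hermitian matrices in its more general form, $\lambda_{i+j-1}(\mathbf{A}+\mathbf{B}) \le \lambda_i(\mathbf{A}) + \lambda_j(\mathbf{B})$, to the Gram matrix $\mathbf{\bar{D}}_\mathbf{C}^H\mathbf{\bar{D}}_\mathbf{C}$. Choosing $i=1$ and $j=l$, and noting that signal--noise cross-terms vanish in expectation as in the proof of Lemma~\ref{lem:l1ub}, this yields
\begin{equation*}
\mathbb{E}\{\lambda_l(\mathbf{\bar{D}}_\mathbf{C}^H\mathbf{\bar{D}}_\mathbf{C})\} \le \mathbb{E}\{\lambda_1(\rho\mathbf{\bar{C}}^H\mathbf{\bar{C}})\} + \mathbb{E}\{\lambda_l(\mathbf{\bar{Z}}^H\mathbf{\bar{Z}})\}.
\end{equation*}
Hence the $l$-th extracted rank-one approximation carries a residual noise energy bounded by $\mathbb{E}\{\lambda_l(\mathbf{\bar{Z}}^H\mathbf{\bar{Z}})\} = n_t\,\lambda_{l,\mathrm{mean}}$, a quantity strictly smaller than the $\lambda_1$ bound that would arise from a naive $L$-fold reuse of Lemma~\ref{lem:l1ub}.

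Next I would reuse the channel-reconstruction scaling employed in Corollary~\ref{col:ssesc} and Lemma~\ref{lem:l1ub}: each rank-one approximation is rescaled by $\sqrt{n_t n_r/\rho}$ to contribute an additive term to $\mathbf{\hat{H}}_{\mathbf{\widetilde{D}}}$, inflating the per-entry residual variance by the factor $n_t n_r/\rho$. Because the $L$ rank-one components produced by SIC sit in mutually orthogonal singular-vector directions, their noise contributions decouple and add in Frobenius norm. Distributing the per-component bound across the $n_t n_r$ entries of the estimate and summing over $l$ yields
\begin{equation*}
\mathbb{E}\bigl\{\|\mathbf{\hat{H}}_{\mathbf{\widetilde{D}}} - \mathbf{H}\|_F^2\bigr\} \le \frac{n_t n_r}{\rho}\sum_{l=1}^{L} n_t\,\lambda_{l,\mathrm{mean}},
\end{equation*}
which coincides with the claimed upper bound.

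The main obstacle I anticipate is rigorously justifying that the $l$-th ordered noise eigenvalue (rather than $L$ copies of the largest one) is the correct term governing the residual error at iteration $l$. This rests on two subtle points requiring careful treatment: first, that the Weyl bound with $i=1,\,j=l$ transfers cleanly from the original joint Gram matrix to the residual Gram matrix actually seen by SIC after $l-1$ cancellations; and second, that the Eckart--Young--Mirsky optimality of the accumulated rank-$L$ truncation ensures that the aggregated noise energy captured equals the sum of the top-$L$ eigenvalues of $\mathbf{\bar{Z}}^H\mathbf{\bar{Z}}$ rather than $L$ replicas of its dominant eigenvalue. Once these two ingredients are in place, the variance-propagation steps of Lemma~\ref{lem:l1ub} carry through almost verbatim, delivering the stated bound.
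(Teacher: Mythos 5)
Your overall strategy---bound the energy captured by a low-rank truncation of $\mathbf{\bar{D}}_{\mathbf{C}}$ via eigenvalue perturbation inequalities, identify the excess over the signal energy with the ordered noise eigenvalues, and propagate the resulting variance exactly as in Lemma~\ref{lem:l1ub}---is the right one, and you correctly see that the point is to obtain the sum of the top-$L$ noise eigenvalues rather than $L$ copies of the largest. However, the inequality you actually write down does not deliver this. Weyl with $i=1$, $j=l$ gives $\lambda_l(\mathbf{\bar{D}}_{\mathbf{C}}^{H}\mathbf{\bar{D}}_{\mathbf{C}}) \le \lambda_1(\rho\mathbf{\bar{C}}^{H}\mathbf{\bar{C}}) + \lambda_l(\mathbf{\bar{Z}}^{H}\mathbf{\bar{Z}})$, so summing over $l=1,\dots,L$ yields $\sum_{l}\lambda_l(\mathbf{\bar{D}}_{\mathbf{C}}^{H}\mathbf{\bar{D}}_{\mathbf{C}}) \le L\,\lambda_1(\rho\mathbf{\bar{C}}^{H}\mathbf{\bar{C}}) + \sum_{l}\lambda_l(\mathbf{\bar{Z}}^{H}\mathbf{\bar{Z}})$. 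The quantity that must be subtracted to isolate the residual noise energy is the true signal energy $\mathbb{E}\{\rho\|\mathbf{\bar{C}}\|_F^2\} = \rho\,\mathbb{E}\{\sum_l|\alpha_l|^2\} = \rho$, whereas your signal term is $L\,\mathbb{E}\{\lambda_1(\rho\mathbf{\bar{C}}^{H}\mathbf{\bar{C}})\} \ge \rho$, with equality only when all signal eigenvalues coincide. The uncancelled surplus means your chain proves only a strictly weaker bound; in particular, the step ``hence the $l$-th extracted rank-one approximation carries a residual noise energy bounded by $\mathbb{E}\{\lambda_l(\mathbf{\bar{Z}}^{H}\mathbf{\bar{Z}})\}$'' does not follow, because the inequality you invoke pairs the $l$-th noise eigenvalue with the \emph{largest} signal eigenvalue rather than with the $l$-th path's own energy $\rho\,\mathbb{E}\{|\alpha_l|^2\}$.

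The missing ingredient---which you gesture at in your ``main obstacle'' paragraph but never actually supply---is the Ky Fan inequality for partial sums of eigenvalues of Hermitian matrices, $\sum_{l=1}^{L}\lambda_l(\mathbf{A}+\mathbf{B}) \le \sum_{l=1}^{L}\lambda_l(\mathbf{A}) + \sum_{l=1}^{L}\lambda_l(\mathbf{B})$, applied in one shot to the rank-$L$ truncation $\mathbf{\widetilde{D}}_{\mathbf{C}_L}$ of $\mathbf{\bar{D}}_{\mathbf{C}}$. This is what the paper does: the signal term then collapses to $\sum_{l=1}^{L}\mathbb{E}\{\lambda_l(\rho\mathbf{\bar{C}}^{H}\mathbf{\bar{C}})\}=\rho$, the noise term to $\sum_{l=1}^{L}n_t\lambda_{l,\mathrm{mean}}$, and the variance propagation of Lemma~\ref{lem:l1ub} finishes the proof. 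Note also that the paper analyzes the single rank-$L$ truncation rather than tracking $L$ successive SIC extractions; this sidesteps both the need to attribute noise energy to individual extracted components and the need to argue that residuals from imperfect cancellations can be ``absorbed into the noise term,'' neither of which your sketch justifies.
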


\begin{proof}
Similarly to Lemma \ref{lem:l1ub}, we consider in this case the $L$-rank approximation of $\mathbf{\bar{D}}_{\mathbf{C}}$, denoted here as $\mathbf{\widetilde{D}}_{\mathbf{C}_L}$. In this case
\begin{equation}
    \mathrm{var}(\tilde{d}_{{m,n}^{(L)}}) = \frac{\mathbb{E}\left\{\left\|\mathbf{\widetilde{D}}_{\mathbf{C}_L} \right\|_{F}^{2}\right\} }{n_t n_r} = \frac{\mathbb{E}\left\{\sum_{l=1}^{L}\lambda_l(\mathbf{\bar{D}}_\mathbf{C}^{H}\mathbf{\bar{D}}_\mathbf{C})\right\}}{n_t n_r},
    \label{eq:vartildesc2}
\end{equation}
where $\tilde{d}_{{m,n}^{(L)}}=[\mathbf{\widetilde{D}}_{\mathbf{C}_L}]_{m,n}$. By applying the Ky Fan inequality \cite{moslehian2012ky} for Hermitian matrices
\begin{equation}
    \sum_{l=1}^{L}\mathbb{E}\left\{\lambda_l(\mathbf{\bar{D}}_\mathbf{C}^{H}\mathbf{\bar{D}}_\mathbf{C})\right\}  \leq \sum_{l=1}^{L} \mathbb{E}\left\{\lambda_l (\rho\mathbf{\bar{C}}^{H}\mathbf{\bar{C}})\right\} + 
     \sum_{l=1}^{L}\mathbb{E}\left\{\lambda_l (\mathbf{\bar{Z}}^{H}\mathbf{\bar{Z}})\right\}.
\end{equation}
As in Eq.~(\ref{eq:eigC}), the first term $\sum_{l=1}^{L} \mathbb{E}\left\{\lambda_l (\rho\mathbf{\bar{C}}^{H}\mathbf{\bar{C}})\right\}=\rho$, while for the second term it becomes convenient to study the distribution of the eigenvalues of the normalized Gaussian random matrix $\frac{1}{n_t}\mathbf{\bar{Z}}^{H}\mathbf{\bar{Z}}$, which asymptotically follows the Marchenko-Pastur density \cite{pastur1967}. Without loss of generality, considering $n_t \geq n_r$, the eigenvalue \ac{PDF} can be written as
\begin{equation}
    f_{\lambda}(x) = \frac{1}{2\pi\sigma_z^2}\frac{\sqrt{(x-a)(b-x)}}{cx},
\end{equation}
where $c = n_r/n_t$ and $a,b = \sigma_z^2(1 \pm \sqrt{c})^2$. The \ac{CDF} is then given by
\begin{equation}
    F_{\lambda}(x) = \int_{a}^{x}  f_{\lambda}(x) dx = \dot{F}(x) - \dot{F}(a) 
\end{equation}
where
\begin{eqnarray}
    \dot{F}(x) &=& \sqrt{(x-a)(b-x)} +\frac{a+b}{2}\arcsin{\left(\frac{2x-a-b}{b-a}\right)} \nonumber \\
    &-& \frac{ab}{\sqrt{ab}}\arcsin{\left(\frac{(a+b)x -2ab}{x(b-a)}\right)}. 
\end{eqnarray}
Note, however, that the algorithm relies on the magnitude-ordered singular values, so we are interested in the distributions of the order statistics for the $n_r$ eigenvalues of $\frac{1}{n_t}\mathbf{\bar{Z}}^{H}\mathbf{\bar{Z}}$, with \ac{PDF}
\begin{equation}
\begin{split}
    f_{\lambda_l}(x) & = (n_r-l+1)\binom{n_r}{n_r-l+1} f_{\lambda}(x)\\
    &  \left(F_{\lambda}(x) \right)^{n_r-l}\left(1-F_{\lambda}(x) \right)^{l-1},  
\end{split}
\end{equation}
where the subindex $\lambda_l$ indicates that such PDF corresponds to the $l$-th largest eigenvalue from $n_r$ samples of the underlying eigenvalue distribution. The above formula allows to compute numerically the mean for each of the magnitude-descending-ordered eigenvalues $\lambda_{l,\mathrm{mean}} = \mathbb{E}\left\{\lambda_l\left(\frac{1}{n_t}\mathbf{\bar{Z}}^{H}\mathbf{\bar{Z}}\right) \right\}$, so that
\begin{equation}
      \mathrm{var}(\tilde{d}_{{m,n}^{(L)}}) \leq \frac{\left(\rho + \sum_{l=1}^{L}n_t\lambda_{l,\mathrm{mean}}\right)}{n_t n_r},
\end{equation}
with a residual variance bounded by  $\sigma^2_{e} \leq \frac{\sum_{l=1}^{L}n_t \lambda_{l,\mathrm{mean}}}{n_t n_r}$. 

Let us write the channel estimate derived from $\mathbf{\widetilde{D}}_{\mathbf{C}_L}$ as $ \mathbf{\hat{H}}_{\mathbf{\widetilde{D}}_L} =  \sqrt{\frac{n_t n_r}{\rho}}\mathbf{\widetilde{D}}_{\mathbf{C}_L} = \mathbf{H} + \mathbf{E}_L$, with residual variance $\sigma^2_{\check{e}}=\frac{n_t n_r}{\rho}\sigma^2_e$, so that the mean \ac{SSE}, $\mathbb{E} \left\{ \left\|\mathbf{\hat{H}}_{\mathbf{\widetilde{D}}_L}-\mathbf{H}\right \|_F^2 \right\}$, is
\begin{equation}
\begin{split}
& \mathbb{E}\left\{\left\| \mathbf{E}_L \right\|^2_F\right\} < n_t n_r \frac{n_t n_r}{\rho} \sigma^2_{e}~=~\frac{n_t n_r}{\rho}\sum_{l=1}^{L}n_t \lambda_{l,\mathrm{mean}}.
\end{split}
\end{equation}

\end{proof}
\subsection{Lower Bound Analysis}
\begin{theorem}
A lower bound for the mean \ac{SSE} of the TSDCE algorithm is given by the estimator that, departing from the low-rank approximation of the spatial domain observation $\mathbf{\bar{D}}_{\mathbf{C}}$, achieves the \ac{CRLB} of the involved parameters $\bm{\omega}$.
\end{theorem}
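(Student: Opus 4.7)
The plan is to argue that, because TSDCE extracts the parameter vector $\bm{\omega}$ from the low-rank approximation $\mathbf{\widetilde{D}}_{\mathbf{C}}$ and then synthesizes the channel estimate deterministically from $\bm{\omega}$, any lower bound on the variance of an unbiased estimator of $\bm{\omega}$ from $\mathbf{\widetilde{D}}_{\mathbf{C}}$ propagates, via the channel synthesis map, to a lower bound on the mean \ac{SSE} of the overall channel estimate. I would first cast the estimation stage following Section~\ref{sec:svd} as a standard parametric model: given $\mathbf{\widetilde{D}}_{\mathbf{C}} = \sqrt{\rho}\,\hat{\mathbf{C}}(\bm{\omega}) + \text{(residual)}$, and the channel $\mathbf{H}(\bm{\omega}) = \sqrt{n_t n_r}\sum_l \alpha_l \mathbf{a}_r(\psi_l)\mathbf{a}_t^H(\phi_l)$, the TSDCE output can be written as $\mathbf{\hat{H}} = \mathbf{H}(\hat{\bm{\omega}})$ for some estimator $\hat{\bm{\omega}}$ derived from $\mathbf{\widetilde{D}}_{\mathbf{C}}$.

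Next, I would invoke the classical Cramér--Rao inequality for the parametric model on $\mathbf{\widetilde{D}}_{\mathbf{C}}$: for any unbiased estimator $\hat{\bm{\omega}}$, $\mathrm{Cov}(\hat{\bm{\omega}}) \succeq \mathbf{I}(\bm{\omega})^{-1}$, where $\mathbf{I}(\bm{\omega})$ is the Fisher Information Matrix associated with the model $\mathbf{\widetilde{D}}_{\mathbf{C}}\mid \bm{\omega}$. I would then transfer this to the channel domain by a first-order (delta-method) expansion $\mathbf{H}(\hat{\bm{\omega}}) - \mathbf{H}(\bm{\omega}) \approx \mathbf{J}_{\mathbf{H}}(\bm{\omega})(\hat{\bm{\omega}}-\bm{\omega})$, where $\mathbf{J}_{\mathbf{H}}$ is the Jacobian of the vectorized channel with respect to $\bm{\omega}$, yielding the inequality
\begin{equation}
\mathbb{E}\!\left\{\left\|\mathbf{H}(\hat{\bm{\omega}})-\mathbf{H}(\bm{\omega})\right\|_F^2\right\} \;\geq\; \mathrm{Tr}\!\left(\mathbf{J}_{\mathbf{H}}(\bm{\omega})\,\mathbf{I}(\bm{\omega})^{-1}\mathbf{J}_{\mathbf{H}}(\bm{\omega})^H\right).
\end{equation}
The right-hand side is precisely the mean \ac{SSE} achieved by any estimator that attains the \ac{CRLB} of $\bm{\omega}$, and it is a lower bound on the TSDCE \ac{SSE} since the TSDCE parameter estimator cannot do better than the CRLB-attaining one in the mean-square sense.

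Finally, I would close the argument by noting that since \textbf{any} estimator operating on $\mathbf{\widetilde{D}}_{\mathbf{C}}$ respects the data-processing inequality---TSDCE included---and since the proposed scheme is, by construction, one particular estimator of $\bm{\omega}$ from $\mathbf{\widetilde{D}}_{\mathbf{C}}$ followed by the deterministic synthesis $\mathbf{H}(\cdot)$, its SSE cannot fall below the CRLB-attaining SSE associated with the same input. Combining this with Lemmas~\ref{lem:l1ub}--\ref{lem:lh1ub} delimits the performance of TSDCE between an explicit upper bound (SVD-residual variance) and the CRLB-based lower bound, which establishes the claim.

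The main obstacle is a clean, explicit handling of the Fisher information for $\mathbf{\widetilde{D}}_{\mathbf{C}}$: the low-rank truncation introduces a residual $\mathbf{E}$ whose statistics are not exactly Gaussian i.i.d., so strictly speaking one must either (i) relax to the model on $\mathbf{\bar{D}}_{\mathbf{C}}$ (where the noise is known i.i.d.\ Gaussian with variance $1/\mathrm{SNR}_{\mathbf{C}}$ by Lemma~\ref{lem:channel}, giving a well-defined CRLB that still lower-bounds the SSE of any estimator using less information), or (ii) work asymptotically, where the rank-$L$ SVD truncation is lossless and the residual is again Gaussian. Either route makes the CRLB well defined and yields a valid lower bound; the most delicate point is justifying that passing through the low-rank approximation does not discard information needed to attain the bound, for which (i) is the safer choice.
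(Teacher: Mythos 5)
Your proposal is correct and follows the same overall strategy as the paper: both treat TSDCE as a parametric estimator of $\bm{\omega}$ operating on the low-rank spatial-domain observation, and both take the CRLB of $\bm{\omega}$ as the reference that no unbiased estimator in this family can beat. The differences lie in how the bound is made concrete. The paper writes the $4L\times 4L$ Fisher information matrix for the model $\mathbf{\hat{H}}_{\mathbf{\widetilde{D}}_L}=\mathbf{H}(\bm{\omega})+\mathbf{E}_L$, assumes $\mathbf{E}_L$ is zero-mean complex white Gaussian with the residual variance $\sigma^2_{\check{e}}$ inherited from Lemma~\ref{lem:lh1ub} (justified only by empirical verification), takes the required derivatives from the 2D frequency-estimation literature, and then evaluates the bound by Monte-Carlo: each ground-truth parameter is perturbed by noise whose variance equals its own diagonal CRLB entry and the channel is resynthesized. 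Your delta-method step, $\mathbb{E}\{\|\mathbf{H}(\hat{\bm{\omega}})-\mathbf{H}(\bm{\omega})\|_F^2\}\geq \mathrm{Tr}\left(\mathbf{J}_{\mathbf{H}}\,\mathbf{I}(\bm{\omega})^{-1}\mathbf{J}_{\mathbf{H}}^{H}\right)$, is the analytic counterpart of that numerical procedure and is arguably more faithful, since per-parameter perturbation implicitly discards the off-diagonal entries of $\mathbf{I}(\bm{\omega})^{-1}$ (the cross-covariances between parameter errors) that the trace formula retains; your explicit appeal to the Cram\'er--Rao inequality plus data processing also supplies the ``why is this a lower bound'' step that the paper leaves implicit. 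Finally, the obstacle you single out --- that the SVD residual is not exactly i.i.d.\ Gaussian, so the Fisher information of $\mathbf{\widetilde{D}}_{\mathbf{C}}$ is not rigorously defined --- is exactly the point the paper resolves by assumption; your fallback route (i), anchoring the CRLB on $\mathbf{\bar{D}}_{\mathbf{C}}$ where Lemma~\ref{lem:channel} gives an exact i.i.d.\ Gaussian model with variance $1/\mathrm{SNR}_{\mathbf{C}}$, yields a bound that is unconditionally valid, at the price of being slightly looser than the paper's SVD-denoised version.
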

\begin{proof}
The proposed TSDCE algorithm is based on the estimation of the parameters corresponding to the multiple two-dimensional frequencies constituting the channel. By considering the \ac{SVD}-denoised channel observation of Lemma \ref{lem:lh1ub}, a lower bound would be that corresponding to an estimator of the involved channel parameters achieving the \ac{CRLB}.

Consider the elements of the noisy channel observation
\begin{equation}
     \hat{h}_{m,n} = [\mathbf{\hat{H}}_{\mathbf{\widetilde{D}}_L}]_{m,n}=\sum_{l=1}^{L}|\alpha_l|e^{j\left(\angle \alpha_l + \omega_{\psi_l}m + \omega_{\phi_l}n \right)} + \check{e}_{m,n},
\end{equation}
where it has been shown in Lemma \ref{lem:lh1ub} that 
$\sigma^2_{\check{e}}=\frac{1}{\rho}\sum_{l=1}^{L}n_t \lambda_{l,\mathrm{mean}}$. Considering the $4L\times 1$ real vector of unknown parameters $\bm{\omega}$, the corresponding $4L \times 4L$ Fisher information matrix $\bm{\mathcal{F}}$ is 
\begin{equation}
    [\bm{\mathcal{F}}]_{i,j}=-\mathbb{E}\left\{\frac{\partial^2}{\partial\omega_i\partial\omega_j}\log\left(f\left(\mathrm{vec}(\mathbf{\hat{H}}_{\mathbf{\widetilde{D}}_L})|\bm{\omega}\right)\right) \right\},
\end{equation}
where $\omega_i$ is the $i$-th element of $\bm{\omega}$ and $f\left(\mathrm{vec}(\mathbf{\hat{H}}_{\mathbf{\widetilde{D}}_L})|\bm{\omega}\right)$ is the \ac{PDF} of $\mathrm{vec}(\mathbf{\hat{H}}_{\mathbf{\widetilde{D}}_L})$. 
A  zero-mean complex white Gaussian distribution for the elements of $\mathbf{E}_L$ is assumed for mathematical tractability, which has been empirically verified to hold for the considered channel model. In this case, the entries of the Fisher information matrix can be expressed as
\begin{equation}
     [\bm{\mathcal{F}}]_{i,j}=\frac{1}{\sigma^2_{\check{e}}}2\mathrm{Re} \left[\frac{\partial\mathrm{vec}(\mathbf{H})^{H}}{\partial\omega_i}\frac{\partial\mathrm{vec}(\mathbf{H})}{\partial\omega_j}   \right].
\end{equation}

The form of the derivatives involved in each element $[\bm{\mathcal{F}}]_{i,j}$ was derived in Appendix A of \cite{Hua92}. The \ac{CRLB} of the variance of the unbiased estimate for each parameter can therefore be obtained from the corresponding diagonal element of $\bm{\mathcal{F}}^{-1}$. 
\end{proof}

Since $\bm{\mathcal{F}}$ does not have a closed-form inverse, and the derivatives are dependent on the actual value of the parameters, the bound is obtained numerically through Monte-Carlo simulation. First, the CRLB variance is obtained for every parameter in each  channel realization. Then, the channel is reconstructed from the ground-truth channel parameters, each of them corrupted by noise with a variance corresponding to that of its respective CRLB.

\section{Experiments and Complexity Analysis}\label{Sec:Complexity}
In this section, we assess the performance of the TSDCE algorithm\footnote{Code available at https://github.com/SandraRoger/tsdce} through numerical experiments. The method is compared in terms of performance and complexity with the \ac{LS} and \ac{OMP} channel estimation methods in \cite{lee2014}, the DFT-based scheme (DFT-CEA) in \cite{Mon15}, and with the \ac{KF}-based beamtracking approach in \cite{Zha16}. To match an \ac{ABF} architecture, all the methods are evaluated with a single \ac{RF} chain. In the \ac{OMP} case, the number of grids is set to $G=180$ and the number of iterations is set to $I_t=20$ \cite{lee2014}. As in \cite{Mon15}, we set $N_{\text{DFT}}=1024$ points for the DFT-CEA method. Note that all methods are tested under the same pilot overhead, which is directly $Q\times P$ pilots. Also, the number of desired paths is set to $L_d=L$ for all methods. The simulations consider $1000$ random channel realizations for each \ac{SNR} value, following the observation model in Eq.~\eqref{eq:observation} with $\rho = 1$, i.e. the \ac{SNR} definition is $1/\sigma^2_n$. Channel coefficients are drawn from a zero mean complex Gaussian distribution with variance $1/L$, while \ac{AoA} and \ac{AoD} angles are drawn from a uniform distribution in the range $[0,\, \pi]$.

\begin{figure}[t]
\begin{center}
\includegraphics[width =0.7\columnwidth]{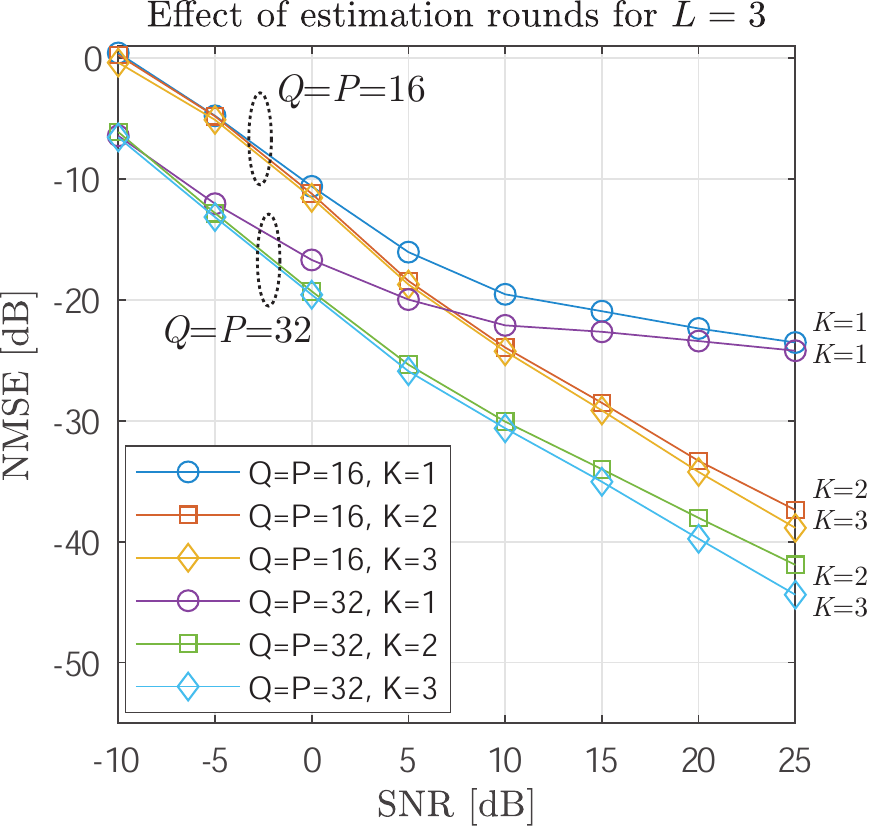}
\caption{{Performance comparison for different values of $K$ in the proposed TSDCE algorithm.}}\label{fig:results3} 
\end{center}
\end{figure}

We first assess the performance of the \ac{AoA}, \ac{AoD} and the path complex coefficients estimation jointly through the \ac{NMSE} of $\mathbf{H}$, defined as $\mathrm{NMSE}\,\textrm{[dB]}  =10\log_{10} \left(\mathbb{E} \left\{\frac{ \left\|\mathbf{\hat{H}}-\mathbf{H}\right \|_F^2} {\left\|\mathbf{H}\right \|_F^2} \right\}\right)$, where $\mathbf{\hat{H}}$ is the estimated channel and the expectation is approximated by averaging. Fig.~\ref{fig:results3} shows the effect of the $K$ parameter in the \ac{TSDCE}, which is directly related with the number of channel estimation rounds for refinement. The \ac{NMSE} is evaluated for different values of \ac{SNR} in a system with $n_t=n_r=16$ and considering codebooks of two different sizes, namely $Q=P=16$ and $Q=P=32$, and $L=3$. Recall that setting $K=1$ provides an initial estimation for the $L$ paths. By running either one or two additional estimation rounds ($K=2$ and $K=3$, respectively), the performance is notably enhanced due to the possibility of removing all the remaining paths from the observation in every iteration. When comparing the $K=2$ and $K=3$ cases, a higher $K$ improves the performance only slightly at high SNRs. Since choosing a higher $K$ increases also the complexity, we set $K=L_d$ in the next simulations for a good performance vs. complexity tradeoff.

Fig.~\ref{fig:results1} and Fig.~\ref{fig:results2} compare the different methods for $n_t=n_r=16$ and codebook with $Q=P=16$ and $Q=P=32$ elements, respectively, and three different values of $L$ (from 1 to 3). It can be observed that the proposed approach outperforms both the \ac{LS} and \ac{OMP} in all cases, and the DFT-CEA approach except at very low SNR values. Another important result is that the TSDCE performance achieves the \ac{CRLB} at medium to high \acp{SNR} in the multi-path cases. When comparing the results with different codebook sizes, an \ac{SNR} gain of approximately $6$~dB is observed, which matches the theoretical \ac{SNR} gain derived in Eq.~\eqref{eq:snrgain}.

\begin{figure*}[t!]
\begin{center}
\includegraphics[width =0.85\textwidth]{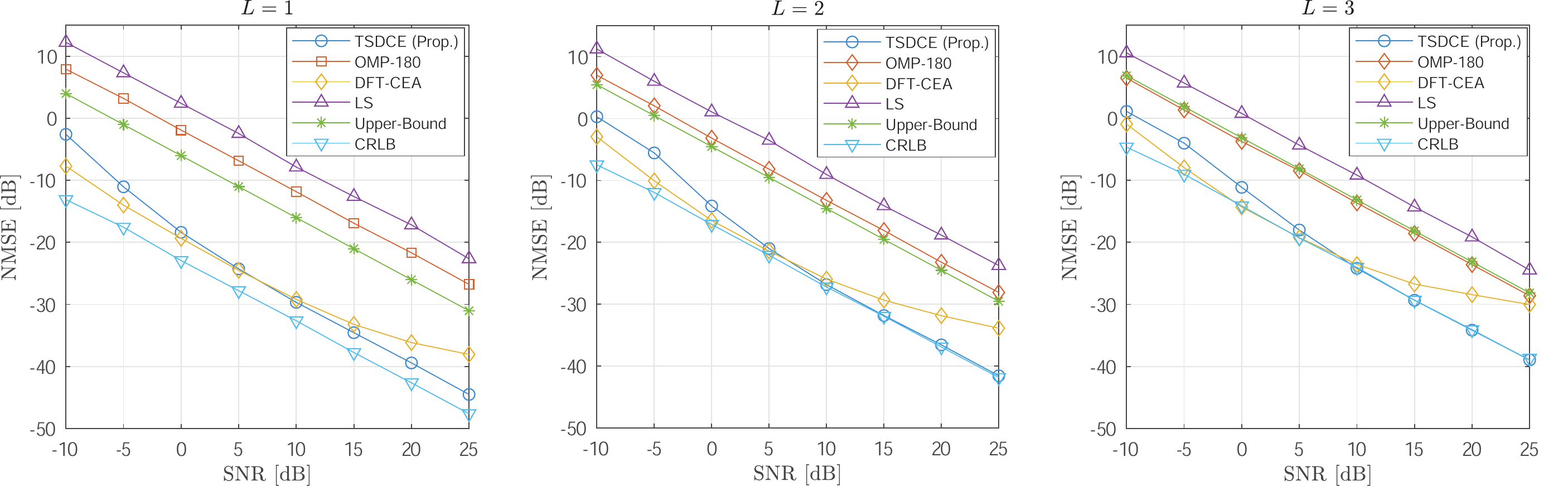}
\caption{{\ac{NMSE} of channel estimation for different values of \ac{SNR} with $n_t = n_r = 16$ and $Q=P=16$.}}\label{fig:results1} 
\end{center}
\end{figure*}
\begin{figure*}[t!]
\begin{center}
\includegraphics[width =0.85\textwidth]{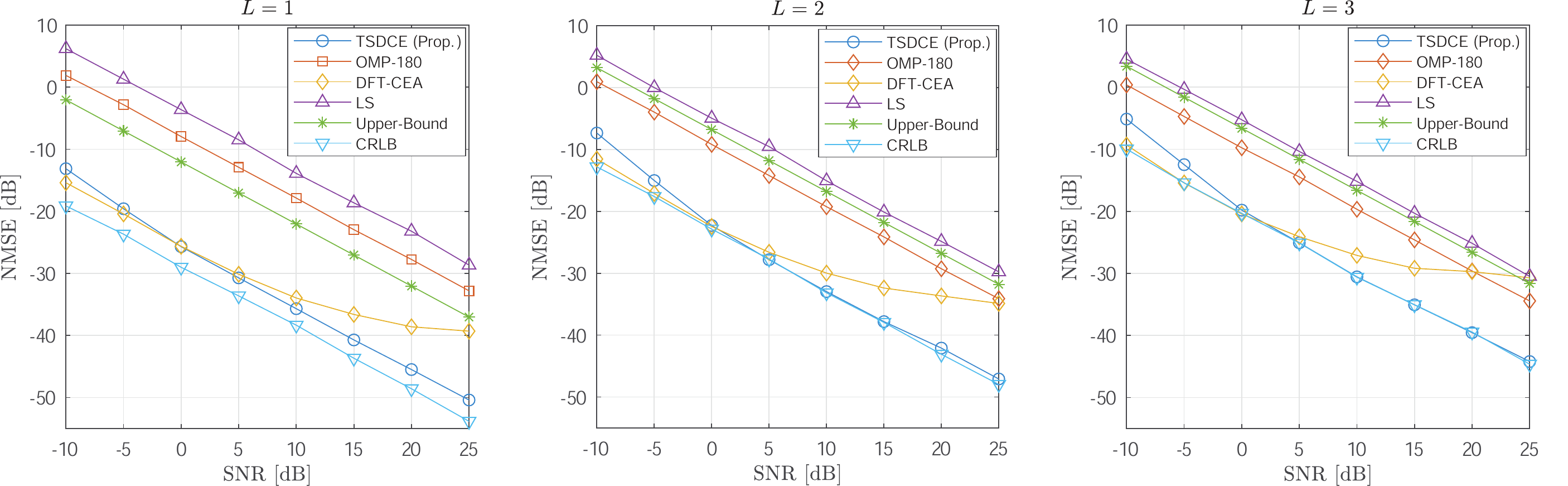}
\caption{{\ac{NMSE} of channel estimation for different values of \ac{SNR} with $n_t = n_r = 16$ and $Q=P=32$.}}\label{fig:results2} 
\end{center}
\end{figure*}

Fig.~\ref{fig:resultsDOA} analyzes the \ac{DoA} estimation capability of the TSDCE and DFT-CEA methods in the setups of Fig.~\ref{fig:results1} and Fig.~\ref{fig:results2} with $L=3$. To this end, the \ac{RMSE} of the angle measurements, calculated as $\sqrt{\frac{1}{|\mathcal{N}|}\sum_{i\in |\mathcal{N}|}(\vartheta_i-\hat{\vartheta}_i)^2 }$, and the probability of detection ($P_D$) at different \acp{SNR} are shown, considering as successful detections those with an \ac{RMSE}~$\leq 1^{\circ}$. According to such definition, the set of successfully detected angles is denoted by $\mathcal{N}$, while $\vartheta_i$ considers any AoA or AoD. At low \acp{SNR}, the DFT-CEA outperforms the TSDCE since it exhibits a higher $P_D$ than the TSDCE with similar \ac{RMSE}. At medium to high \acp{SNR} the TSDCE has superior performance, since both methods present similar $P_D$, having the TSDCE lower \acp{RMSE}. Regarding the codebook sizes, both methods provide enhanced \ac{DoA} estimations for $Q=P=32$, as expected from previous \ac{NMSE} comparisons.

\begin{figure*}[t!]
\begin{center}
\includegraphics[width =0.7\textwidth]{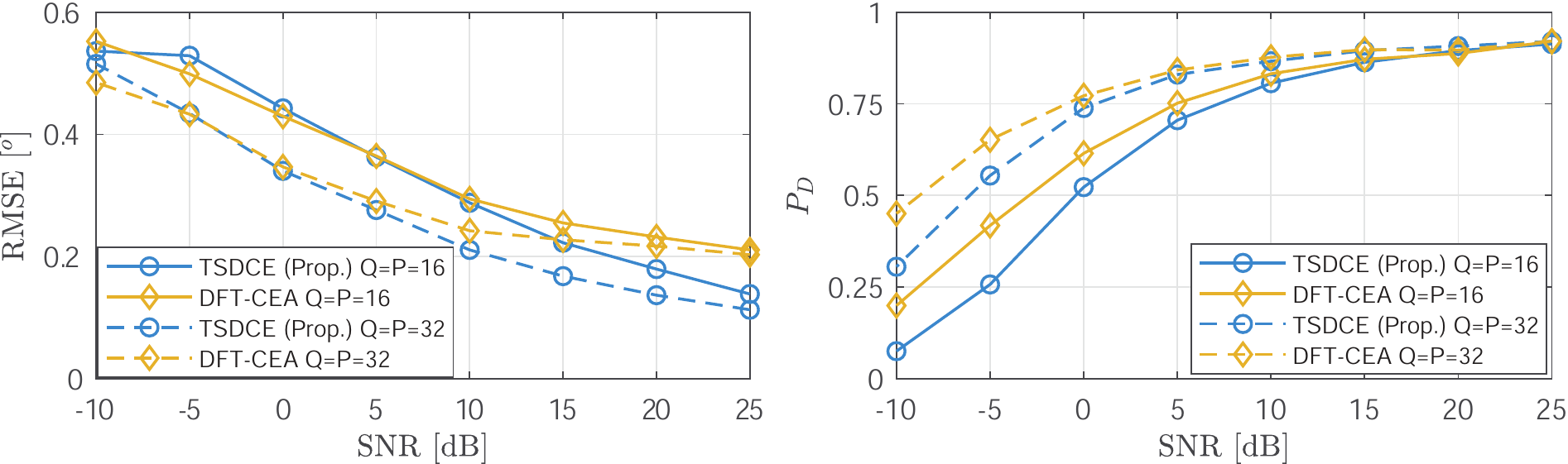}
\caption{{\ac{DoA} \ac{RMSE} and $P_D$ of channel estimation with TSDCE and DFE-CEA for different values of \ac{SNR}, $L=3$ and $n_t=n_r=16$.}}\label{fig:resultsDOA} 
\end{center}
\end{figure*}

Fig.~\ref{fig:resultsL} allows to see the effect of increasing the number of channel paths and antennas, setting $K=3$ for the TSDCE. According to Fig.~\ref{fig:resultsL}~(a), with $L=6$ and $n_t=n_r=32$ the TSDCE is still the method with the best performance at medium to high \acp{SNR}, however, it is not able to reach the CRLB due to the increased number of paths. This result was expected, due to the estimation of a less sparse channel than in the $L=3$ case. In Fig.~\ref{fig:resultsL}~(b), where $n_t=n_r=64$, there is a wider peformance gap with respect to the CRLB, and the performance of the DFT-CEA is severely affected by the beamwidth reduction. In Fig.~\ref{fig:resultsL}~(c) it can be seen that a further increase of the number of paths ($L=8$) penalizes the TSDCE, but it still provides the best performance for positive SNRs.

\begin{figure*}[t!]
\begin{center}
\includegraphics[width =0.85\textwidth]{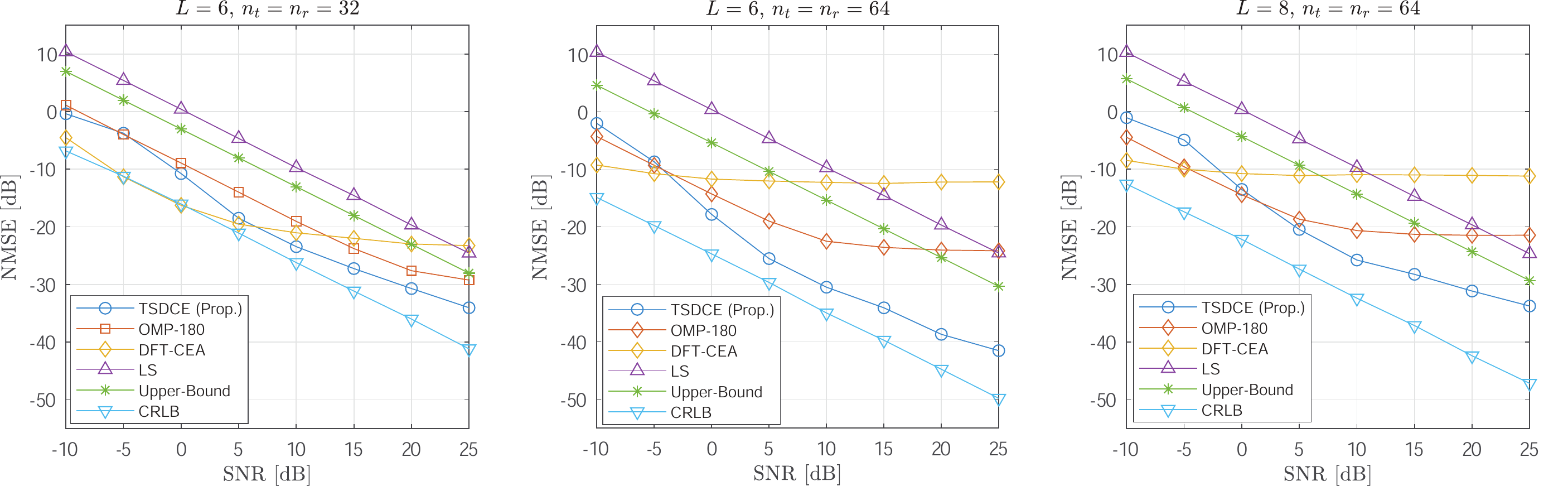}
\caption{{\ac{NMSE} of channel estimation for different values of \ac{SNR} and various numbers of paths and antennas, with codebook sizes matching the number of antennas.}}\label{fig:resultsL} 
\end{center}
\end{figure*}

Fig.~\ref{fig:results4} shows a performance comparison with the \ac{KF}-based beamtracking method proposed in \cite{Zha16} for $n_t = n_r = 16$ and $Q=P=16$. For this tracking scenario, we considered 500 simulated blocks with 100 time slots per block. The \ac{AoA}s and \ac{AoD}s in time slot $j$ are obtained by adding the realization of a zero-mean Gaussian random variable with variance $\sigma^{2}_u$ to the angles in time slot $j-1$. Channel coefficients remain unchanged within the same block. For the \ac{KF} method, the true channel parameters are used as initial tracking estimates for each block. Moreover, we assume perfect knowledge of the channel complex coefficient for the \ac{KF}-based method, which only re-estimates the \ac{AoA} and \ac{AoD} angles at each slot. The algorithms are evaluated considering three different values of the angle standard deviation, $\sigma_u \in \left\{0.5^\circ, 1^\circ, 1.5^\circ \right\}$. A single performance curve is displayed for our proposal, since it is invariant with $\sigma_u$. At $\sigma_u=0.5^\circ$, the \ac{KF} method is only outperformed by TSDCE at high \ac{SNR}. However, it can be seen an increasing degradation of the \ac{KF}-based beam tracking approach for higher angle deviations. The crossing point appears at much lower \ac{SNR}s when $\sigma_u=1^\circ$ or $1.5^\circ$. In fact, at $\sigma_u=1.5^\circ$ the high \ac{NMSE} values make the beam tracking nearly independent from the \ac{SNR}.

\begin{figure}
\begin{center}
\includegraphics[width =0.7\columnwidth]{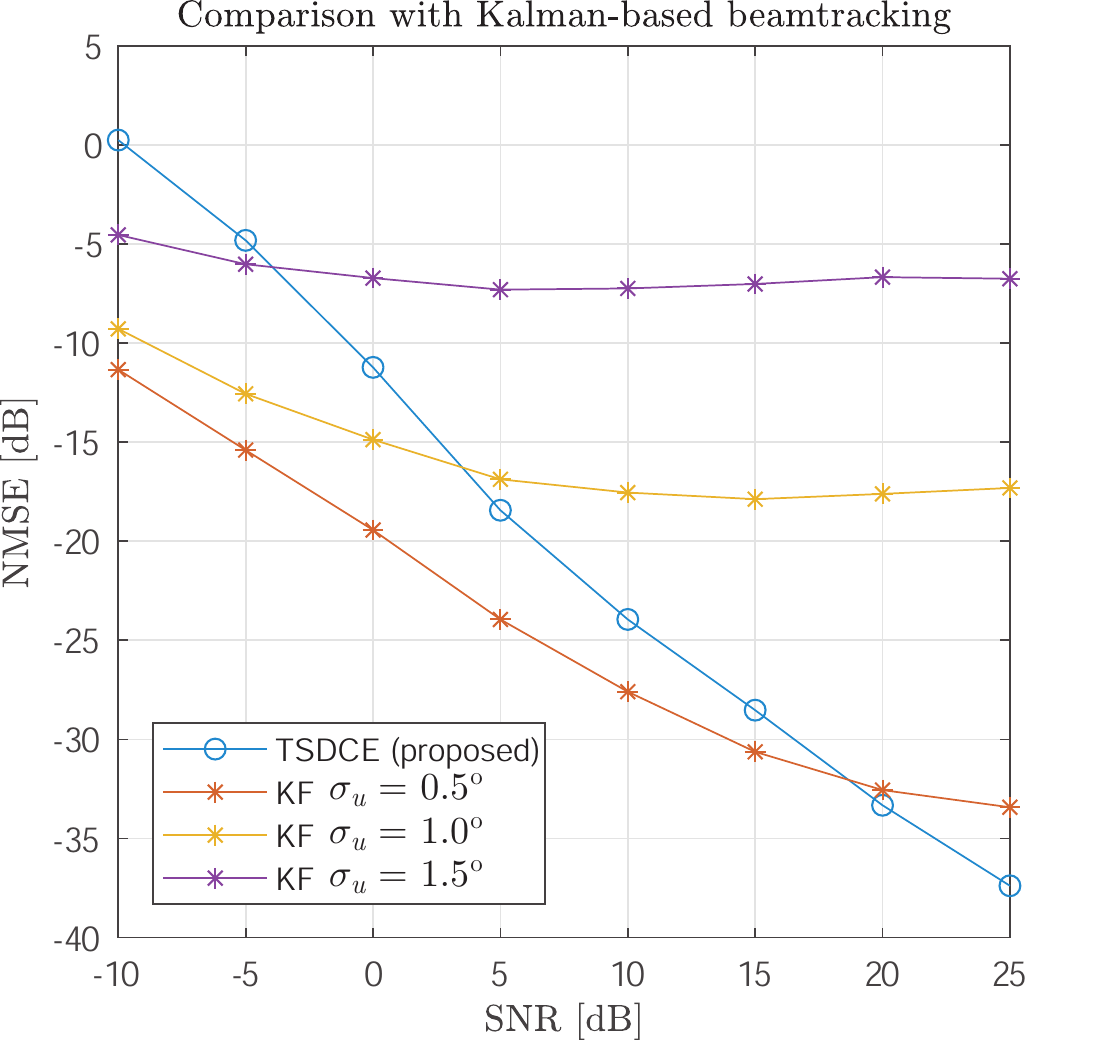}
\caption{{Performance comparison (with $L=3$) w.r.t. \ac{KF} beam tracking \cite{Zha16} at different angle deviations, with $n_t = n_r = 16$ and $Q=P=16$.}}\label{fig:results4} 
\end{center}
\end{figure}

The complexity order of the TSDCE is analyzed below based on the complexities of its core operations provided in \cite{Golub96}. The main computational complexity of Algorithm~\ref{Alg:Alg1} comes from the $\mathrm{IDFT_{2D}}$ in line 1, in the order of $\mathcal{O}(PQ \log_2 (\min (P,Q)))$, as well as from the SVD in line 7, which is $\mathcal{O}((L-1)\cdot n_tn_r \min (n_t,n_r))$, and the $\mathrm{ACF_{2D}}$ in line 12 (carried out for $K$ rounds), in the order of $\mathcal{O}((KL) \cdot 2n_tn_r \log_2 (\min (n_t,n_r)))$. Note that the SVD-related complexity term is the dominant one for a large number of antennas. Table~\ref{tab:complexity} compares the complexity order of the proposed approach with the ones of the LS and OMP methods (derived in~\cite{lee2014}) and of the DFT-CEA method \cite{Mon15}, in the specific case where $QP=n_tn_r$. For instance, when $n_t=n_r=64$ with $L=3$ and the parameters used in the simulations, the complexity orders are TSDCE: $\mathcal{O}(5.2\cdot 10^5)$; LS: $\mathcal{O}(7\cdot 10^{10})$; OMP-180: $\mathcal{O}(8\cdot 10^9)$ and DFT-CEA: $\mathcal{O}(1\cdot 10^7)$. The analysis shows that the proposed TSDCE algorithm is clearly the fastest method in the comparison. In addition, efficient implementations such as the ones derived for the SVD in \cite{Hol08} and \cite{Men11}, and for the IFFT in \cite{Lin05}, can be readily applied to the TSDCE algorithm, achieving a further reduction in the algorithm complexity.

\begin{table*}[t]
\begin{center}
\caption{{Complexity of different channel estimation algorithms for $QP=n_tn_r$}}
\begin{tabular}{c|c|c|c}
\cline{1-4}
\textbf{TSDCE}                                            & \textbf{LS}                       & \textbf{OMP}                & \textbf{DFT-CEA}                    \\ \cline{1-4}
$\mathcal{O}((L-1) \cdot n_tn_r \min (n_t,n_r))$ & $\mathcal{O}((n_tn_r)^3)$ & $\mathcal{O}(G^2 Ln_tn_rI_t)$ & $\mathcal{O}(N^2_{DFT}\log_2N_{DFT})$ \\ \cline{1-4}
\end{tabular}\label{tab:complexity}
\end{center}
\end{table*}

\section{Conclusion}\label{Sec:Conc}
In this paper, a low-complexity algorithm for channel estimation in the transformed spatial domain has been proposed considering analog mmWave systems with AWGN. The key step of the method relies on transforming the directional angular observations into the transformed spatial domain, using rank-one approximations and the sample autocorrelation function as pre-processing steps for robust path extraction. Then, the parameters of each path are estimated by following a frequency estimation approach based on weighted least squares. Analytical upper and lower bounds have been stated, showing that the normalized mean square error of the proposed algorithm remains within such bounds and close to the Cramer-Rao lower bound.
The algorithm has been compared with widely used solutions such as least squares, orthogonal matching pursuit and DFT-based channel estimation, showing that it outperforms the benchmarks at a remarkably lower computational complexity, and without saturating at high signal-to-noise ratios. The comparison with Kalman-based beamtracking shows that the proposed approach is invariant with respect to the angle standard deviation. 
Indeed, the spatial domain interpretation of the channel estimation problem is highly effective for achieving both low-complexity and accurate channel parameter estimates. Future work will consider the extension of the method to planar antenna arrays, hybrid beamforming mmWave architectures and mmWave systems using reconfigurable intelligent surfaces. Additionally, further complexity reduction of the proposed method through fast SVD implementations will be investigated for systems with a large number of antennas. 

\ifCLASSOPTIONcaptionsoff
  \newpage
\fi

\bibliographystyle{IEEEtran}
\bibliography{Beamtrack}

\begin{IEEEbiography}[{\includegraphics[width=1in,height=1.25in,clip,keepaspectratio]{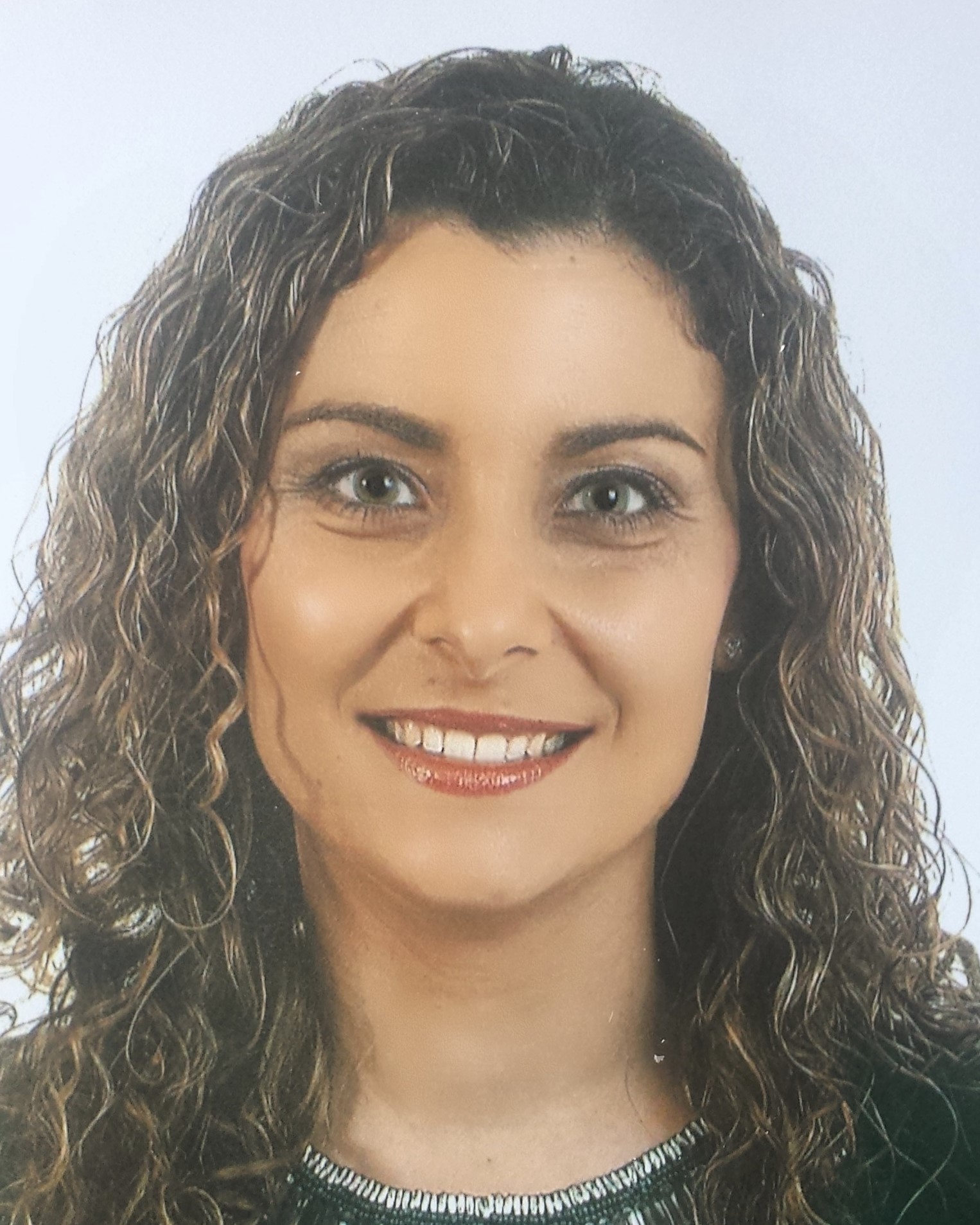}}]{Sandra Roger} (Senior Member, IEEE) received the Ph.D. degree in telecommunications engineering from the Universitat Polit\`ecnica de Val\`encia (UPV), Spain, in 2012. During her doctorate studies, she performed two research stays at the Institute of Telecommunications, Vienna University of Technology, Austria. From July 2012 to December 2018, she was a Senior Researcher with the iTEAM Research Institute, UPV, where she worked in the European projects METIS and METIS-II on 5G design. In January 2019, she joined the Computer Science Department of the Universitat de Val\`encia as a Senior Researcher (“Ramon y Cajal” Fellow). Dr. Roger has authored/coauthored around 60 papers in renowned conferences and journals. Her main research interests are in the field of signal processing for communications, vehicular communications, and wireless system design.
\end{IEEEbiography}

\begin{IEEEbiography}[{\includegraphics[width=1in,height=1.25in,clip,keepaspectratio]{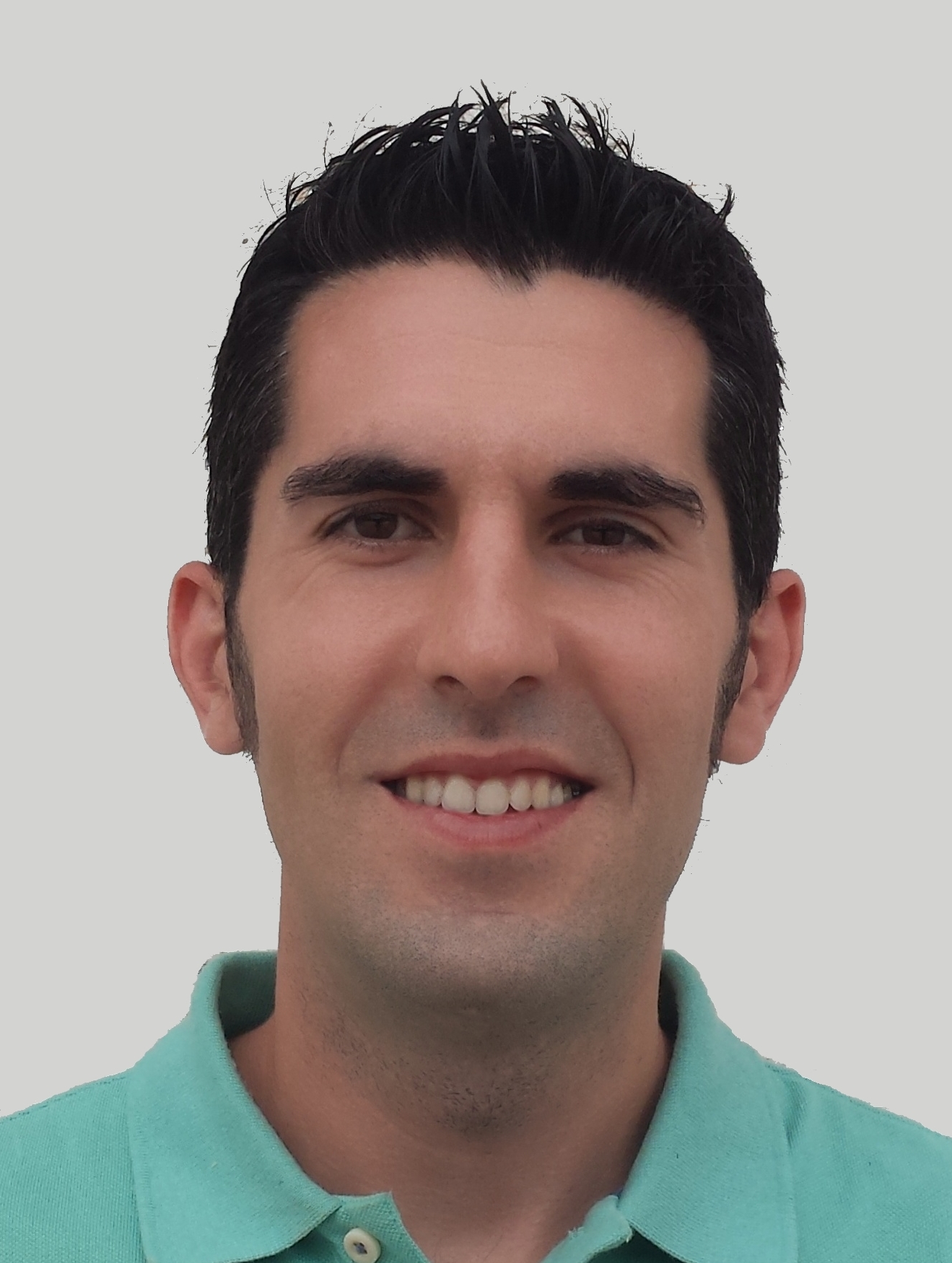}}] {Maximo Cobos} (Senior Member, IEEE) received the master’s degree in telecommunications and the Ph.D. degree in telecommunications engineering from the Universitat Politècnica de València, Spain, in 2007 and 2009, respectively. He completed with honors his studies under University Faculty Training Program (FPU) and was the recipient of the Ericsson Best Ph.D. Thesis Award from Spanish National Telecommunications Engineering Association. In 2010, he received a Campus de Excelencia Postdoctoral Fellowship to work with the iTEAM research institute in Valencia. In 2011, he joined the Universitat de València, where he is currently an Associate Professor. In 2009 and 2011 he was a guest researcher at T-Labs Berlin, Germany, and, in 2019 at Politecnico di Milano, Italy, where he also held an Adjunct Professorship from 2020 to 2021. His work is focused on the area of digital signal processing and machine learning for wireless sensor networks, audio and multimedia applications, where he has authored or coauthored more than 100 technical papers in international journals and conferences. He is a member of the Audio Signal Processing Technical Committee of the European Acoustics Association and serves as an Associate Editor for the IEEE SIGNAL PROCESSING LETTERS.
\end{IEEEbiography}

\begin{IEEEbiography}[{\includegraphics[width=1in,height=1.25in,clip,keepaspectratio]{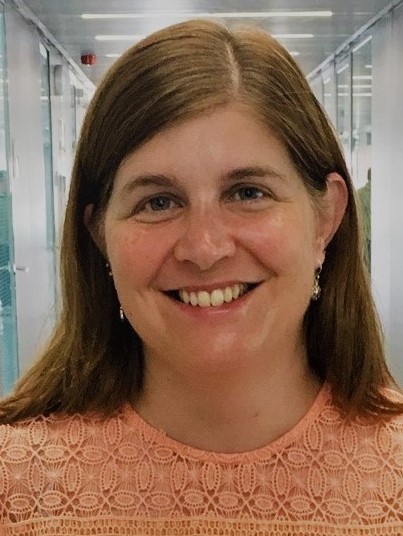}}] {Carmen Botella-Mascarell} (Senior Member, IEEE) received  her  M.Sc.  and Ph.D. degrees in Telecommunications Engineering from Universitat Politècnica de València, Spain, in 2003 and 2008, respectively. In 2009 and 2010, she  was  a  postdoctoral  researcher  in  the Communications Systems and Information Theory  group,  Chalmers  University  of  Technology,  Sweden. In 2011, she joins the Computer Science Department of the Universitat de València where she is currently an Associate Professor. Dr Botella-Mascarell has  authored/coauthored  75  technical  papers  in  international  conferences and journals. Her research interests include the general areas of coordination and cooperation in wireless systems, with special focus on physical-layer solutions for 5G and beyond.
\end{IEEEbiography}

\begin{IEEEbiography}[{\includegraphics[width=1in,height=1.25in,clip,keepaspectratio]{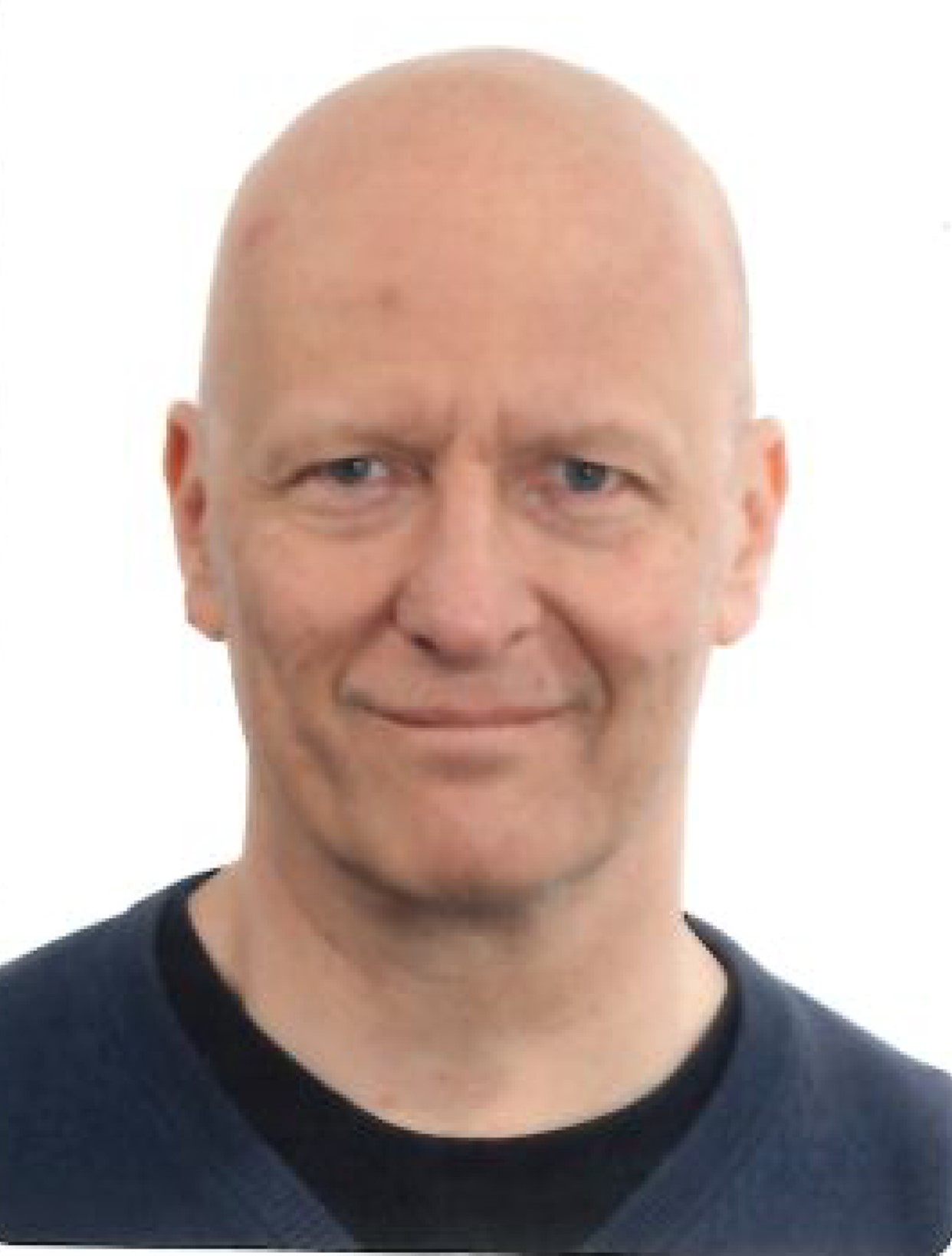}}] {Gábor Fodor} (Senior Member, IEEE) received the Ph.D. degree in electrical engineering from the Budapest University of Technology and Economics in 1998 and the D.Sc. degree from the Hungarian Academy of Sciences (Doctor of MTA) in 2019. He is currently a Master Researcher at Ericsson Research and a Docent and an Adjunct Professor at the KTH Royal Institute of Technology, Stockholm, Sweden. He has authored or coauthored more than 150 refereed journal articles and conference papers and seven book chapters and holds more than 100 granted European and U.S. patents. He was a co-recipient of the IEEE Communications Society Stephen O. Rice Prize in 2018 and the Best Student Conference Paper Award from the IEEE Sweden VT/COM/IT Chapter in 2018. Dr. Fodor is currently the Chair for the IEEE Communications Society Emerging Technology Initiative on Full Duplex Communications. From 2017 to 2020, he was also a member of the Board of the IEEE Sweden joint Communications, Information Theory and Vehicle Technology Chapter. He is currently serving as an Editor for IEEE TRANSACTIONS ON WIRELESS COMMUNICATIONS and IEEE WIRELESS COMMUNICATIONS.
\end{IEEEbiography}
\end{document}